\newtheoremstyle{prop}% name
  {3pt}%      Space above
  {3pt}%      Space below
  {}%         Body font
  {}%         Indent amount (empty = no indent, \parindent = para indent)
  {\scshape}% Thm head font
  {.}%        Punctuation after thm head
  {.5em}%     Space after thm head: " " = normal interword space;
\theoremstyle{prop}
\newtheorem*{proposition}{Proposition}
\newtheoremstyle{def}% name
  {3pt}%      Space above
  {3pt}%      Space below
  {}%         Body font
  {}%         Indent amount (empty = no indent, \parindent = para indent)
  {\scshape}%\itshape}% Thm head font
  {.}%        Punctuation after thm head
  {.5em}%     Space after thm head: " " = normal interword space;
\theoremstyle{def}
\newtheorem*{definition}{Definition}
\definecolor{green}{rgb}{0,0.75,0}
\newcommand{\X}{\boldsymbol{X}}
\newcommand{\x}{\boldsymbol{x}}
\newcommand{\Xobs}{\mathbf{X}}
\newcommand{\Z}{\boldsymbol{Z}}
\newcommand{\Zest}{\hat{\Z}}
\newcommand{\M}{\boldsymbol{M}}
\newcommand{\Ysl}{\tilde{Y}}
\newcommand{\T}{{}^{\top}}
\newcommand{\rank}{\mathrm{rank}}
\newcommand{\mub}{\boldsymbol{\mu}}
\newcommand{\betab}{\boldsymbol{\beta}}
\newcommand{\Sigmab}{\boldsymbol{\Sigma}}
\newcommand{\0}{\boldsymbol{0}}
\newcommand{\I}{\boldsymbol{I}}
\newcommand{\B}{\boldsymbol{B}}
\newcommand{\BMSIR}{\boldsymbol{B}_{\mathrm{MSIR}}}
\newcommand{\inv}{^{-1}}
\DeclareMathOperator{\ind}{\bot\hspace*{-6pt}\bot}
\newcommand{\V}{\boldsymbol{V}}
\renewcommand{\v}{\boldsymbol{v}}
\renewcommand{\a}{\boldsymbol{a}}
\renewcommand{\L}{\boldsymbol{L}}
\newcommand{\Gammab}{\boldsymbol{\Gamma}}
\newcommand{\Omegab}{\boldsymbol{\Omega}}
\newcommand{\Deltab}{\boldsymbol{\Delta}}
\DeclareMathOperator{\Space}{\mathcal{S}}
\DeclareMathOperator{\Real}{\mathbb{R}}
\DeclareMathOperator{\Exp}{\mathrm{E}}
\DeclareMathOperator{\Var}{\mathrm{Var}}
\DeclareMathOperator{\Cov}{\mathrm{Cov}}
\DeclareMathOperator{\diag}{\mathrm{diag}}
\DeclareMathOperator{\argmax}{\mathrm{argmax}}
\newcommand{\txt}[1]{\texttt{#1}}
\renewcommand{\hat}[1]{\widehat{#1}}
\newcommand\figcaption{\def\@captype{figure}\caption}
\newcommand\tabcaption{\def\@captype{table}\caption}
\begin{document}

\title{Model-based SIR for dimension reduction}
\author{Luca Scrucca \\ Universit\`a degli Studi di Perugia}
\date{\today}

\maketitle

\begin{abstract}
A new dimension reduction method based on Gaussian finite mixtures is proposed as an extension to sliced inverse regression (SIR). 
The model-based SIR (MSIR) approach allows the main limitation of SIR to be overcome, i.e., failure in the presence of regression symmetric relationships, without the need to impose further assumptions. 
Extensive numerical studies are presented to compare the new method with some of most popular dimension reduction methods, such as SIR, sliced average variance estimation, principal Hessian direction, and directional regression. 
MSIR appears sufficiently flexible to accommodate various regression functions, and its performance is comparable with or better, particularly as sample size grows, than other available methods.
Lastly, MSIR is illustrated with two real data examples about ozone concentration regression, and hand-written digit classification.

\noindent {\it Keywords:} dimension reduction, sliced inverse regression, mixture modeling, summary plots.
\end{abstract}

\section{Introduction}

The general aim of a regression analysis is to understand how the conditional  cumulative distribution function (cdf) $F(Y|\X$) of a response variable $Y$ varies as a set of $p$ predictors  $\X=(X_1,X_2,\ldots,X_p)\T$ varies. 
Attention is often directed to  the mean function $\Exp(Y|\X)$ and to the variance function $\Var(Y|\X)$. 
Suppose that $d \le p$ linear combinations of the predictors exist such that we can write:
\begin{equation}
F(Y|\X)=F(Y|\betab\T_1\X,\betab\T_2\X,\ldots,\betab\T_d\X)=F(Y|\B\T\X),
\label{cond-cdf}
\end{equation}
where $\B = (\betab_{1},\betab_{2},\ldots,\betab_{d})$ is a $(p \times d)$ matrix of rank$(\B)=d$. If \eqref{cond-cdf} holds, then $Y$ is independent of $\X$ given $\B\T\X$, and we write $Y\ind\X|\B\T\X$.
The \textit{structural dimension of a regression} is defined as the smallest number of distinct linear combinations of the predictors required to characterize the regression of $Y$ on $\X$.
Equivalently, we can say that the subspace $\Space(\B)$ spanned by the columns of $\B$ is the \textit{dimension-reduction subspace} (DRS) for the regression of $Y$ on $\X$. It always exists, since we can trivially set $\B=\I$ but, in this case, we do not reduce the dimension, as the aim is to reduce the dimensionality of the problem as much as possible. A minimum DRS has the property of having minimum dimension among all the DRSs for the regression of $Y$ on $\X$. It can be shown that a minimum DRS may not be unique (of course, when several of such subspaces exist, they all have the same dimension). To avoid such non-uniqueness, the \textit{central dimension-reduction subspace} (CDRS) has been defined as the intersection over all DRSs. If a CDRS exists, then it is the unique minimum DRS \citep[Chap. 6]{Cook:1998}.
Every plot of $Y$ over a CDRS is called \textit{sufficient summary plot}. If we plot $Y$ over a minimum CDRS, we obtain a \textit{minimal sufficient summary plot} which will contain all the sample information available in the data about $F(Y|\X)$.

The aim of dimension reduction methods is to estimate the central subspace without estimating, or even assuming, a response model, and without strong assumptions on the form of the dependence between $Y$ and $\X$. 
Several methods have been proposed to estimate the CDRS, such as sliced inverse regression \citep[SIR;][]{Li:1991}, principal Hessian directions \citep[PHD;][]{Li:1992}, sliced average variance estimation \citep[SAVE;][]{Cook:Weisberg:1991}, parametric inverse regression \citep[PIR;][]{Bura:Cook:2001}, directional regression \citep[DR;][]{Li:Zha:Chiaromonte:2005} and inverse regression estimation \citep[IRE;][]{Cook:Ni:2005}.
They are all powerful premodeling tools for reducing high-dimensional regression problems by identifying a few linear combinations of the original predictors. 
When the structural dimension of the regression is 1, 2 or perhaps 3, as in most practical applications, the reduced dimensionality allows for effective visualization of data, and also greatly facilitates model building, particularly for non-parametric modeling.

In this paper, we propose a new dimension reduction method based on finite Gaussian mixture models (GMM). The proposal is an extension of SIR, which allows us to avoid the limitations of the basic SIR procedure without imposing further conditions. 
The next section presents the model-based SIR (MSIR) method, which is then illustrated with simulated data sets and its behavior compared with other dimension reduction methods. The consistency and sensitivity of MSIR are also discussed.
Section 3 deals with determining the dimensionality of the central subspace: two methods are discussed, a sequential test procedure and a BIC-type criterion. 
Section 4 analyses two real data examples: the first regards regression of ozone concentration levels on some primary pollutants and atmospheric conditions, and the second deals with the classification of hand-written digits.
The final section presents some concluding remarks.

\section{Model-based sliced inverse regression}

\subsection{Motivation}
\label{sec:motivation}

Sliced inverse regression (SIR) is one of the first and perhaps the most popular dimension reduction method. \cite{Li:1991} showed that, in certain conditions, an estimate of the basis of CDRS %$\Space_{Y|\X}$ 
can be obtained by the first $d$ eigenvectors of the decomposition of $\Var(\Exp(\X|y))$ with respect to $\Var(\X)$. 
%$\Sigmab = \Var(\X)$. 

SIR requires the linearity condition and the coverage condition. The linearity condition concerns the marginal distribution of the predictors, i.e., $\Exp( \a\T\X|\B\T\X)$ must be linear in $\B\T\X$ for all $\a \in \Real^p$. \citet{Li:1991} emphasized that this condition is not a severe restriction, since most low-dimensional projections are close to being normal. With a fixed $d$, it holds approximately as $p \rightarrow \infty$ \citep{Hall:Li:1993}. 
In addition, the condition is required to hold only for the basis $\B$ of the CDRS. Since $\B$ is unknown, in practice it is required to hold for all possible $\B$, which is equivalent to the elliptical symmetry distribution (such as multivariate normal) of $\X$ \citep{Cook:Weisberg:1991}. In practice, transforming predictors so that they are approximately multivariate normal \citep{Velilla:1993} or reweighting \citep{Cook:Nachtsheim:1994} may help when gross non-linearities are present.

The coverage condition requires a method to recover all of the central subspace, not just part of it. In the context of \eqref{cond-cdf}, this condition is equivalent to requiring that $\Cov(Y,\X) \ne 0$ \citep{Yin:Cook:2005}.
It is well-known that SIR directions span at least a part of the CDRS (Cook, 1998, Prop. 10.1).
This because SIR gains information from the variation in the inverse mean function but fails when symmetric dependencies are present; this is a case of violation of the coverage condition.

\paragraph{Example} Let us consider the simple model $Y = X_1 + X_2^2$, where predictors $\X = (X_1,X_2,X_3,X_4)$ are sampled from $N_4(\0, \I_4)$ distribution; for the sake of simplicity, no error term is included. The true dimension reduction subspace is spanned by $(1,0,0,0)$ and $(0,1,0,0)$, but SIR can only find the first direction, since $\Exp(X_j|Y)=0$ for $j=2,3,4$.

\subsection{Method}

SIR estimation is based on the information provided by the inverse regression mean function $\Exp(\X|Y)$. In practice, for a continuous response variable, the range of $Y$ is sliced into $H$ non-overlapping slices $S_h$, $\Ysl = \{h: Y \in S_h \}$ for $h=1,\ldots,H$, so that the number of observations in each slice is approximately equal. Then, variation on slice means, $\mub_h = \Exp(\X|\Ysl=h)$ for $h=1,\ldots,H$, yields the SIR kernel matrix $\M = \Var(\Exp(\X|\Ysl))$, and SIR directions are obtained from the generalized eigendecomposition of $\M$ with respect to $\Var(\X)$.
%$\Sigmab$.
The distribution of the data within any slice is summarized only by the within-slice means. The underlying assumption is that the distribution of the predictors is elliptical and compact. However, it may happen that the data follow a more complicated distribution, and important characteristics are lost if we do not take this into account.

A more flexible modeling approach may be pursued by using finite mixtures of Gaussian densities to approximate the distribution of the predictors within any slice, and then obtain the kernel matrix from the corresponding component means.
Let us assume that, for the $h$-th slice, the data can be described as follows:
\begin{equation}
f(\x|\Ysl=h) = f_h(\x) = \sum_{k=1}^{K_h} \pi_{hk} \phi(\x; \mub_{hk}, \Sigmab_{hk}),
\label{msir:mix}
\end{equation}
where $\phi(.)$ is the multivariate Gaussian density with mean $\mub_{hk}$ and covariance $\Sigmab_{hk}$, $\pi_{hk}$ are the mixing weights, so that $\pi_{hk} \ge 0$ and $\sum_k \pi_{hk}=1$, and $K_h$ is the number of components of the finite mixture.
The marginal distribution of the predictors is thus given by:
\begin{equation*}
f(\x) = \sum_{h=1}^H \tau_h f_h(\x) 
      = \sum_{h=1}^H \sum_{k=1}^{K_h} \omega_{hk} \phi(\x; \mub_{hk}, \Sigmab_{hk}),
\end{equation*}
where $\tau_h = \Pr(Y \in S_h)$ and $\omega_{hk} = \tau_h\pi_{hk}$ ($\omega_{hk} \ge 0$, $\sum_{h,k}\omega_{hk} = 1$) is the weight associated with the $k$-th component within slice $h$. 
The total number of mixture components is $K = \sum_{h=1}^H K_h$.

\begin{definition}
Consider the kernel matrix:
\begin{equation*}
\M = \sum_{h=1}^H\sum_{k=1}^{K_h} \omega_{hk} (\mub_{hk} - \mub)(\mub_{hk} - \mub)\T,
\end{equation*}
which is given by the covariance matrix of the between-component means and the marginal covariance matrix $\Sigmab = n\inv\sum_{i=1}^n (\x_i - \mub)(\x_i - \mub)\T$ with $\mub = \sum_h\sum_k \omega_{hk} \mub_{hk}$.
An estimate of the CDRS is the solution of the following constrained optimization:
\begin{equation*}
\argmax_{\B} \; \B\T \M \B, 
\;\text{ subject to } \B\T \Sigmab \B = \I,
\end{equation*}
where $\B \in \Real^{p \times d}$ is the spanning matrix and $\I$ is the $(d \times d)$ identity matrix.
This is solved through the generalized eigendecomposition:
\begin{equation}
\begin{split}
\M\v_j = \lambda_j\Sigmab\v_j & \quad \v_j\T \Sigmab \v_l = 1 \;\text{ if } j = l \text{, and } 0 \text{ otherwise}, \\
                        & \quad l_1 \ge l_2 \ge \ldots \ge l_d > 0.
\end{split}
\label{msir:decomp}
\end{equation}
The eigenvectors corresponding to the first $d$ largest eigenvalues provide a basis for the CDRS, $\BMSIR = [\v_1,\ldots,\v_d]$. 
There are at most $d=\min(p,K-1)$ directions which span this subspace, and these are the ones which show the maximal variation between component means.
When only one mixture component is used for each slice, i.e., $K_h=1$ for all slices $h=1,\ldots,H$, the kernel matrix of MSIR is equal to that provided by SIR.
\end{definition}

We call this approach MSIR (\textit{Model-based SIR}), so that the CDRS %$\Space_{Y|\X}$ 
is spanned by directions $\BMSIR$, and the projections onto the subspace are defined as $\Z = \B\T_{\mathrm{MSIR}}\X$.

\begin{proposition}
Each eigenvalue of the eigendecomposition in \eqref{msir:decomp} is given by the variance of the between-component means along the corresponding direction of the projection subspace, i.e.
\begin{equation*}
\lambda_j = \Var(\Exp(Z_j|\Ysl^*)), \qquad\forall\ j=1,\ldots,d,
\end{equation*}
where $\Ysl^* = \{ k: Y \in S^*_k \}$, $S^*_k$ being the set made up of mixture components within each slice ($k = 1, \ldots, K)$. 
\end{proposition}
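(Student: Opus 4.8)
The plan is to recognize the claimed quantity $\Var(\Exp(Z_j\mid\Ysl^*))$ as a quadratic form in the eigenvector $\v_j$, identify that form with $\v_j\T\M\v_j$, and then collapse it to $\lambda_j$ using the generalized eigenequation together with the normalization constraint. Throughout I write $Z_j=\v_j\T\X$ for the $j$-th projected coordinate, and I index the mixture components by the pairs $(h,k)$, so that the component label $\Ysl^*$ takes value $(h,k)$ with probability $\omega_{hk}$.

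First I would unpack the conditioning. Since the conditional density of $\X$ given component $(h,k)$ is the Gaussian $\phi(\x;\mub_{hk},\Sigmab_{hk})$, its within-component mean is exactly $\mub_{hk}$, whence $\Exp(Z_j\mid\Ysl^*=(h,k))=\v_j\T\mub_{hk}$. By the law of total expectation the unconditional mean is then $\Exp(Z_j)=\sum_{h,k}\omega_{hk}\,\v_j\T\mub_{hk}=\v_j\T\mub$, using $\mub=\sum_{h,k}\omega_{hk}\mub_{hk}$ from the Definition.

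Next I would compute the variance of these conditional means directly over the distribution of the component label:
\begin{equation*}
\begin{split}
\Var(\Exp(Z_j\mid\Ysl^*)) &= \sum_{h=1}^H\sum_{k=1}^{K_h}\omega_{hk}\,(\v_j\T\mub_{hk}-\v_j\T\mub)^2 \\
&= \v_j\T\left[\sum_{h=1}^H\sum_{k=1}^{K_h}\omega_{hk}(\mub_{hk}-\mub)(\mub_{hk}-\mub)\T\right]\v_j = \v_j\T\M\v_j,
\end{split}
\end{equation*}
where the bracketed matrix is precisely the kernel matrix $\M$. Finally, left-multiplying the eigenequation $\M\v_j=\lambda_j\Sigmab\v_j$ by $\v_j\T$ and invoking the normalization $\v_j\T\Sigmab\v_j=1$ gives $\v_j\T\M\v_j=\lambda_j$, which yields $\Var(\Exp(Z_j\mid\Ysl^*))=\lambda_j$.

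The argument is a short chain of substitutions, so I do not anticipate a serious obstacle; the only point demanding care is the bookkeeping that $\Ysl^*$ ranges over all $K=\sum_h K_h$ mixture components rather than over the $H$ slices, so that the averaging weights are the joint weights $\omega_{hk}=\tau_h\pi_{hk}$ and not the slice probabilities $\tau_h$. This is exactly what makes the between-component scatter reproduce $\M$ and thereby gives the eigenvalues their interpretation as between-component variances of the projected predictors.
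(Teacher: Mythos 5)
Your argument is correct and is essentially the paper's proof written coordinate-wise: the paper works in matrix form, setting $\L=\V\T\M\V=\V\T\Var(\Exp(\X|\Ysl^*))\V=\Var(\Exp(\Z|\Ysl^*))$ from $\M\V=\Sigmab\V\L$ and $\V\T\Sigmab\V=\I$, while you do the same computation one eigenvector at a time via $\v_j\T\M\v_j=\lambda_j$. Your explicit verification that $\M$ equals the between-component covariance $\Var(\Exp(\X|\Ysl^*))$ is a step the paper takes for granted, but it is not a different route.
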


\begin{proof}
For any kernel matrix, we may rewrite the eigendecomposition in \eqref{msir:decomp} as $\M\V = \Sigmab\V\L$. Since by definition $\V\T\Sigmab\V=\I$, the diagonal matrix of eigenvalues $\L = \diag(\lambda_i)$ may be expressed as:
\begin{equation*}
\L = \V\T\M\V = \V\T\Var(\Exp(\X|\Ysl^*))\V 
   = \Var(\Exp(\Z|\Ysl^*)) = \diag(\Var(\Exp(Z_j|\Ysl^*)),
\end{equation*}
where $\Z = \B\T_{\mathrm{MSIR}}\X$, are the MSIR predictors. 
Therefore, each eigenvalue is equal to the variance of the between-component means along the associated direction.
\end{proof}

Following this result, we can interpret the contribution of each direction to the estimation of the CDRS. In addition, the directions corresponding to small eigenvalues provide little or no information about differences in means within components. Formal assessment of the number of directions required to span the CDRS is discussed in Section~\ref{sec:dim}.

\subsection{Estimation}

MSIR estimation can be pursued by applying the eigendecomposition in \eqref{msir:decomp} with suitable estimates of the unknown matrices $\M$ and $\Sigmab$. The usual sample covariance matrix $\hat{\Sigmab}$ is used for the latter. An estimate $\hat{\M}$ of the kernel matrix is computed from the estimated within-slice component means $\hat{\mub}_{hk}$ ($k=1,\ldots,K_h$; $h=1,\ldots,H$) obtained by fitting the finite mixture models in \eqref{msir:mix}.

The most popular algorithm to estimate finite mixture parameters is the Expectation-Maximization (EM) algorithm \citep{Dempster:Laird:Rubin:1977}, which converges to a maximum likelihood estimate of the mixture parameters. 
In the context of finite mixture models, %for model-based clustering 
an important point is the choice of the correct model \citep[Chapter 6]{McLachlan:Peel:2000}. In our case, this amounts to choosing both the covariance structure and the number of components. 
Parsimonious parameterization of the covariance matrices for each component within slice, $\Sigmab_{hk}$, can be achieved by imposing restrictions on such geometric feature as volume, shape and orientation of the corresponding hyperellipsoids \citep{Banfield:Raftery:1993, Celeux:Govaert:1995}.
This model selection step clearly affects the estimation of means $\mub_{hk}$ and mixture proportions $\pi_{hk}$, and thus kernel matrix $\M$.

One common approach to the problem of model selection in finite mixture modeling is based on Bayesian model selection via Bayes factors. \citet{Kass:Raftery:1995} showed than an approximation to the Bayes factor can simply be computed through the Bayesian Information Criterion (BIC). 
This proved to be efficient on practical grounds, particularly for density estimation \citep{Fraley:Raftery:1998, Fraley:Raftery:2002}.
Alternatively, \citet{Biernacki:CeleuxGovaert:2000} and \citet{Biernacki:Celeux:Govaert:Langrognet:2006} discussed the use of the Integrated Complete Likelihood (ICL) criterion. 

% To summarize, the sample version of MSIR is obtained by applying the eigendecomposition in \eqref{msir:decomp} with the kernel matrix $\hat{\M}$ computed from the means estimated for each mixture component within slices, where the number of components $K_h$ and the form of the covariance matrices $\Sigmab_{hk}$ are selected using the BIC criterion. The resulting basis estimate is indicated as $\hat{\B}=(\hat{\beta}_1,\ldots,\hat{\beta}_d)$.

The algorithm for MSIR estimation may be summarized as follows:
\begin{enumerate}
\setlength{\itemindent}{1mm}
\setlength{\topsep}{0mm}
\setlength{\partopsep}{0mm}
\setlength{\parsep}{0mm}
\setlength{\parskip}{0mm}
\setlength{\itemsep}{0mm}
\item Obtain a sliced version $\Ysl$ of response variable $Y$ using $H$ non-overlapping slices (this step is not needed if $Y$ has support on a finite number of points, such as a discrete or a categorical variable).
\item Fit Gaussian finite mixture models with the EM algorithm to approximate the distribution of $\X|(\Ysl=h)$ for $h=1,\ldots,H$. The number of components $K_h$ and covariance structure $\Sigmab_{hk}$ within each slice are selected by the BIC criterion.
\item Compute kernel matrix $\hat{\M}$ from the means estimated for each mixture component within slices.
\item Perform the generalized eigendecomposition of $\hat{\M}$ with respect to the sample covariance matrix $\hat{\Sigmab}$ of the predictors.
\item The corresponding eigenvectors provide an estimate of the basis of the subspace, and are indicated as $\hat{\B}_{\text{MSIR}}=(\hat{\betab}_1,\ldots,\hat{\betab}_d)$, where $\hat{\betab}_j = \hat{\v}_j/||\hat{\v}_j||$ for $j=1,\ldots,d$, i.e., each direction is scaled to have unit norm.
\end{enumerate}

% MSIR estimation, as SIR, is not overly sensitive to the choice of the number of slices. However, we need to allow enough observations within any slice to fit mixture models. By default, we use $H=\max(3,\lfloor\log_2(n/\sqrt{p})\rfloor)$ number of slices, where $\lfloor u \rfloor$ indicates the largest integer not greater than $u$. See Section~\ref{sec:numberofslices} for further discussion on this point.

\paragraph{Example (continued)} Recalling the example discussed at the end of Section~\ref{sec:motivation}, the left-hand graphs in Figure~\ref{fig:motivating_example} show the plots of the response variable vs the first two predictors, which correspond to the basis of the subspace, for a sample of $n=400$ observations. The vertical ticks at the bottom of each graph represent the slice means for $H=5$ slices. As can be seen, the slice means for the second predictor are almost equal, which is why SIR is prevented from recovering this direction. 
Instead, the MSIR method discussed here is also able to recover the second direction. 
The right-hand graphs in Figure~\ref{fig:motivating_example} show the plots of the response variable vs the first two predictors with the estimated slice components means at the bottom. Now, means along the direction of the second predictor are spread out, which enables MSIR to recover the corresponding direction.
The estimated coefficients for the basis of the subspace are $(0.033, 0.998, -0.010, -0.045)$ and $(0.999, -0.034, -0.016, -0.020)$, with corresponding eigenvalues $0.847$ and $0.623$ (those associated with the null space are $0.107$ and $0.027$). Thus, the first direction can capture the symmetric curve, and the second direction shows the linear trend. 
 
\begin{figure}[htbp]
\centering
\includegraphics[width=0.9\linewidth]{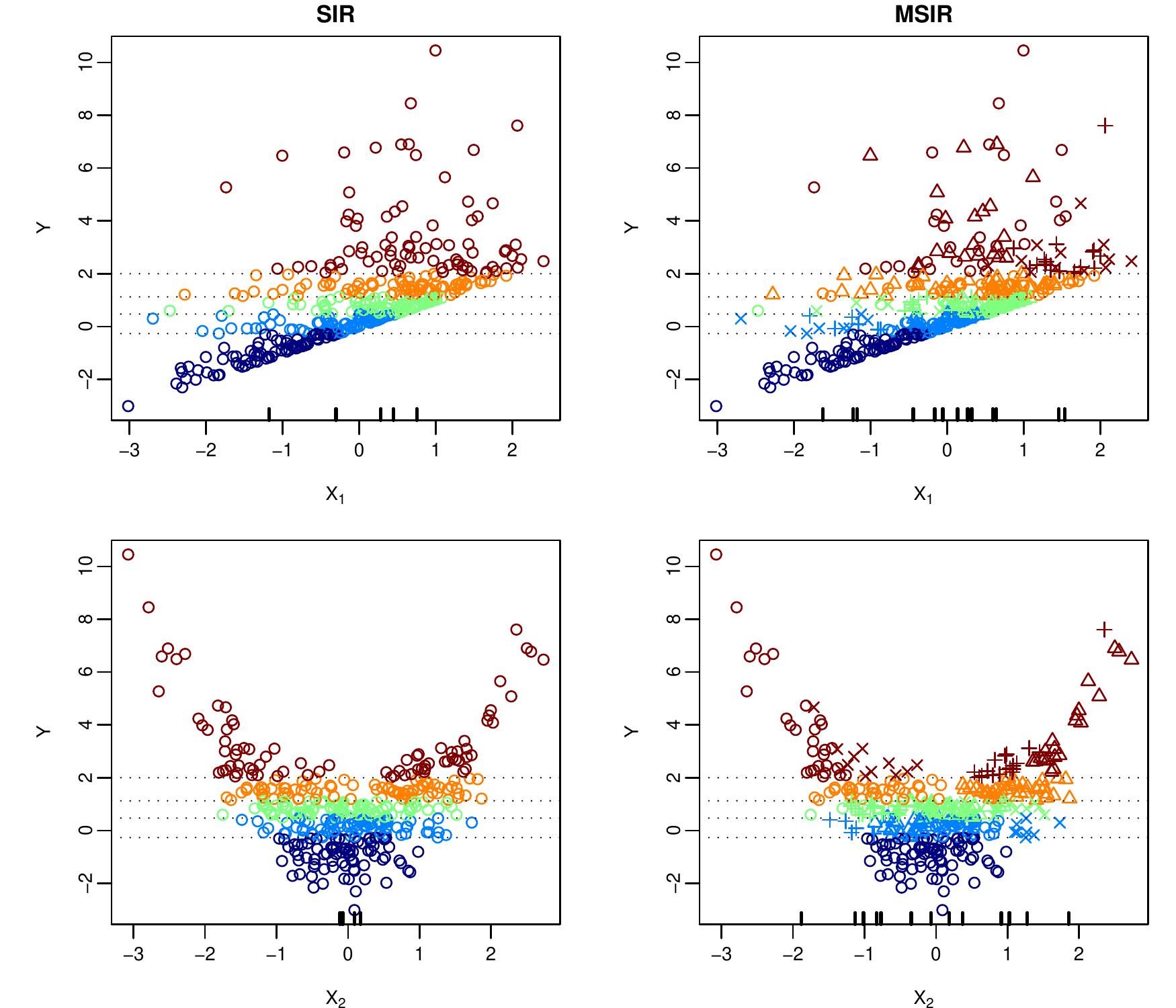}
\caption{Plots of the response variable vs the first two predictors which form the basis of the subspace for the model $Y = X_1 + X_2^2$, $\X = (X_1,X_2,X_3,X_4) \sim N_4(\0, \I_4)$. 
%Points are marked according to the sliced version of $Y$ in the SIR case,  whereas they indicate the different components within slices in the MSIR case. 
The horizontal dotted lines show the cutoff values used for slicing the response variable. 
Left graphs for SIR: vertical ticks at bottom of each plot represent the estimated within-slice means along corresponding direction. 
Right graphs for MSIR: points are marked by different symbols according to mixture component within-slice to which they are assigned; in this case, vertical ticks at bottom of each plot represent the estimated components within-slice means along corresponding direction.}
\label{fig:motivating_example}
\end{figure}

\subsection{Consistency of MSIR estimator}

\citet[Section~5]{Li:1991} demonstrated the $\sqrt{n}$-consistency of the SIR estimator. His arguments were based on the consistency of the individual components of the SIR algorithm.
In analogy, we argue that the MSIR estimator is $\sqrt{n}$-consistent. A full asymptotic analysis of the sample properties exceeds the scope of this paper, so this section provides a few basic ideas and results.

Let us consider the population MSIR decomposition matrix in \eqref{msir:decomp} in the equivalent form $\Sigmab^{-1/2}\M\Sigmab^{-1/2}$. $\hat{\Sigmab}$ is a $\sqrt{n}$-consistent estimator of $\Sigmab$ by the central limit theorem and, provided that $\Sigmab$ is nonsingular, $\hat{\Sigmab}^{-1/2}$ is also a $\sqrt{n}$-consistent estimator of $\Sigmab^{-1/2}$ by the continuous mapping theorem. In analogy, $\hat{\M}$ is a $\sqrt{n}$-consistent estimator of $\M$. Therefore, the eigenvectors of $\hat{\Sigmab}^{-1/2}\hat{\M}\hat{\Sigmab}^{-1/2}$ are $\sqrt{n}$-consistent estimators of the eigenvectors of the population counterpart. 

Figure~\ref{fig:msir_consistency} shows the average maximal angle between the true subspace and the subspace estimated by MSIR as a function of $1/\sqrt{n}$ for some settings of the models discussed in Section~\ref{sec:simulations1}.
% when $p=10$ and $\sigma = 0.5$ for the models 1 and 3 discussed in Section~\ref{sec:simulations1}, $p=10$ and $\rho = 0.5$ for model 4, $p=10$ and $a=0.5$ for the last model. 
If $\sqrt{n}$-consistency holds, then an approximately linear relationship should be visible in the graph, and this is the case for the examples considered. 

\begin{figure}[htb]
\centering
\def\baselinestretch{1.1}
\includegraphics[width=\linewidth]{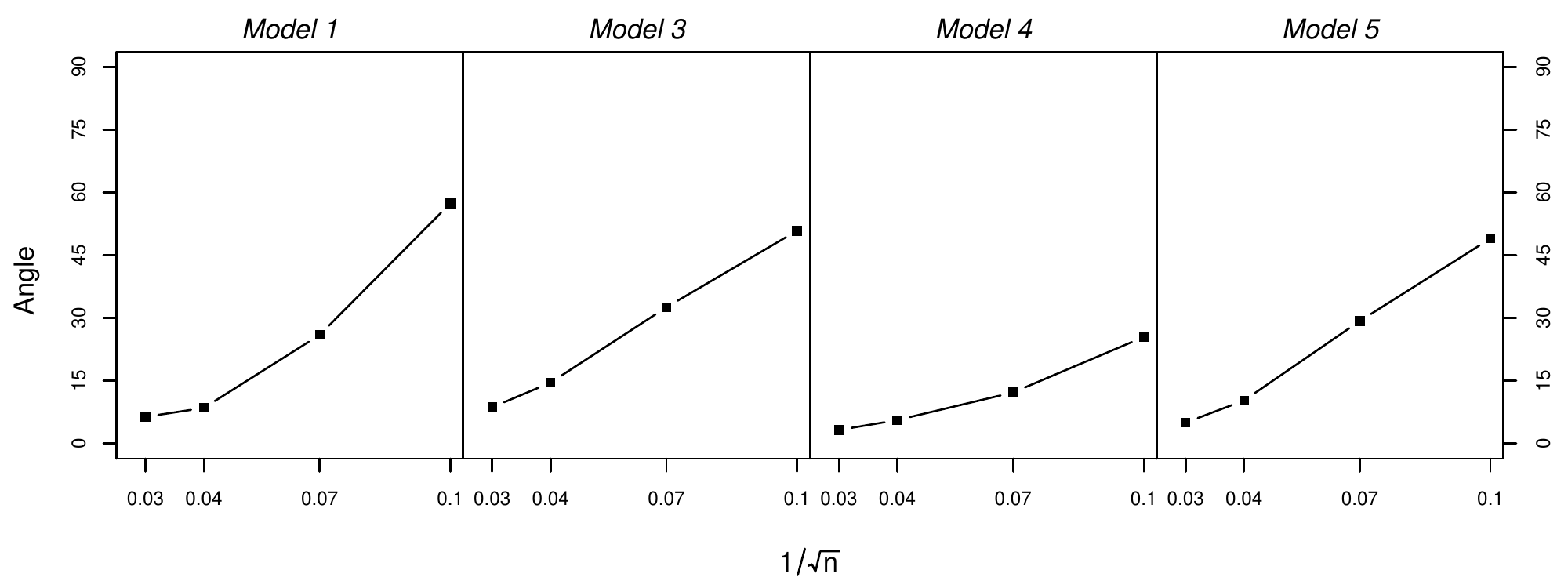}
\caption{Estimation accuracy of MSIR for checking $\sqrt{n}$-consistency. Each graph plots the average maximal angle between MSIR estimates and true subspace against $1/\sqrt{n}$. Data from simulations with $p=10$ predictors and parameters $\sigma=0.5$ for models 1 and 3, $\rho=0.5$ for model 4, $a=0.5$ for  model 5.}
\label{fig:msir_consistency}
\end{figure}

\section{Simulation studies}

\subsection{Estimation accuracy}
\label{sec:simulations1}

In this section we use simulations to examine the ability of MSIR to recover the true subspace, and compare its performance with that of other dimension reduction methods, such as SIR, SAVE, PHD and DR.
To evaluate the accuracy of a dimension reduction method to estimate the true CDRS, we made use of the following distance measure \citep[see also][]{Li:Zha:Chiaromonte:2005}. Let $\Space(\B)$ and $\Space(\hat{\B})$ be two $d$-dimensional subspaces of $\Real^p$, spanned respectively by true basis $\B$ and an arbitrary estimate $\hat{\B}$. Also let $P_{\Space(\B)}$, $P_{\Space(\hat{\B})}$ be the corresponding orthogonal projections onto $\Space(\B)$ and $\Space(\hat{\B})$.
These subspaces may be compared through the following measure:
\begin{equation}
\label{spectral_norm}
\Delta(\hat{\B}, \B) = \| P_{\Space(\hat{\B})} - P_{\Space(\B)} \| 
  = \|\hat{\B}(\hat{\B}\T\hat{\B})^{-1}\hat{\B}\T - \B(\B\T\B)^{-1}\B\T \|,
\end{equation}
where $\|.\|$ is the spectral Euclidean norm, i.e., the maximum singular value \citep{Gentle:2007}.
Equation~\eqref{spectral_norm} measures maximal angle $\alpha$ between two subspaces of $\Real^p$. It can be shown that $\Delta(\hat{\B}, \B) = \sin\alpha \in [0,1]$ \citep[p. 455]{Meyer:2000}.
% Thus, if $\Delta(\hat{\B}, \B)=0$ the true subspace is completely recovered by the estimated subspace. Usually, the latter only provides an approximation to the true subspace, so in general $\Delta(\hat{\B}, \B)$ will be larger than 0 and increases up to 1 when the two subspaces are orthogonal.

In the following, we treat dimension $d$ of the CDRS as fixed. Only some results are shown here (tables and graphics of the complete simulation study appear in the Supplementary material). 

\smallskip\noindent\textbf{Model 1.}
% sim.data3
Consider the following single-index model with a symmetric response curve:
\begin{equation*}
Y = (0.5\beta\T\X)^2 + \sigma \epsilon,
\end{equation*}
where $\beta=(1,-1,0,\ldots,0)\T$, and the predictors and the error term follow independent standard normal distributions.
It is known that one of the major limitations of SIR arises from the presence of symmetric response curves. 
The left-hand graph in Figure~\ref{fig1:sim_data3} shows a scatterplot of the response variable vs the first estimated SIR direction for a sample of $n=200$ observations on $p=5$ predictors, and $\sigma=0.1$. The curved mean function is completely absent along this projection. Conversely, the direction estimated by MSIR is shown in the right-hand graph, and the symmetric relationship with the response variable is clearly visible: note that $\Delta(\hat{\beta}_{\text{MSIR}}, \beta) = 0.086$, which corresponds to an angle of $4.9^{\circ}$, compared with an angle of $88^{\circ}$ for SIR.

\begin{figure}[htb]
\centering
\def\baselinestretch{1.1}
\includegraphics[width=0.8\linewidth]{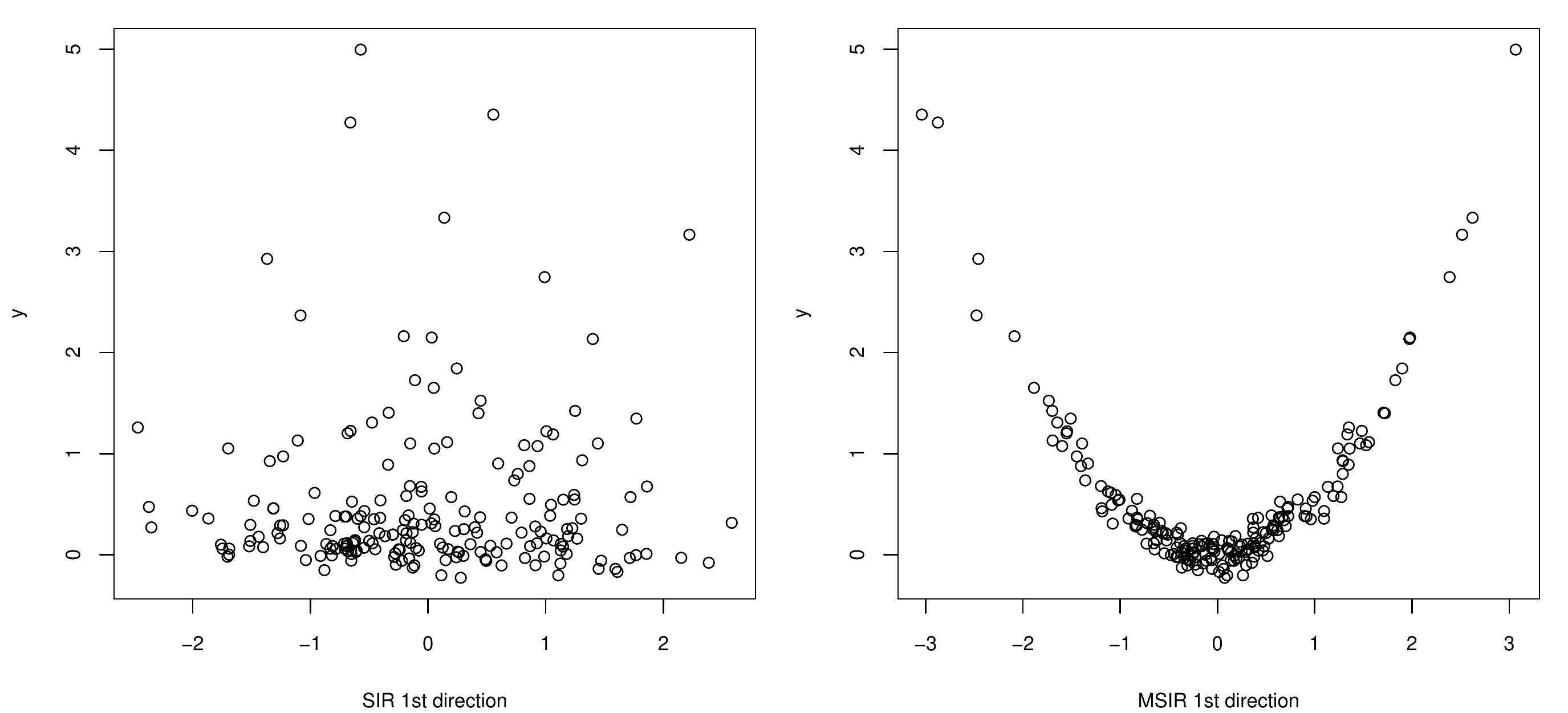}
\caption{Summary plots for single-index regression model with symmetric curve: response variable is plotted against first SIR direction (left) and first MSIR direction (right).}
\label{fig1:sim_data3}
\end{figure}

MSIR seems to be a great improvement over SIR, but it is also interesting to compare its behavior with other dimension reduction methods, particularly PHD and SAVE, which were developed to deal with such a situation. 
Figure~\ref{fig:sim_data3} shows the results of a simulation study for the above symmetric response model with number of predictors $p=10$ at various sample sizes ($n$) and error standard deviations ($\sigma$). Overall, MSIR is a great improvement over SIR. 
Compared with SAVE and PHD, which are known to work particularly well in the case of symmetric and curved relationships, the accuracy of MSIR is comparable when $\sigma$ is small and as sample size increases. When a large amount of noise is present and sample size is relatively small, MSIR tends to perform slightly less well. However, for less noisy data, the accuracy of MSIR is higher than with SAVE, PHD and DR. Note that for this model the accuracy of DR is very similar to that of SAVE.

\begin{figure}[htb]
\centering
\def\baselinestretch{1.1}
\includegraphics[width=\linewidth]{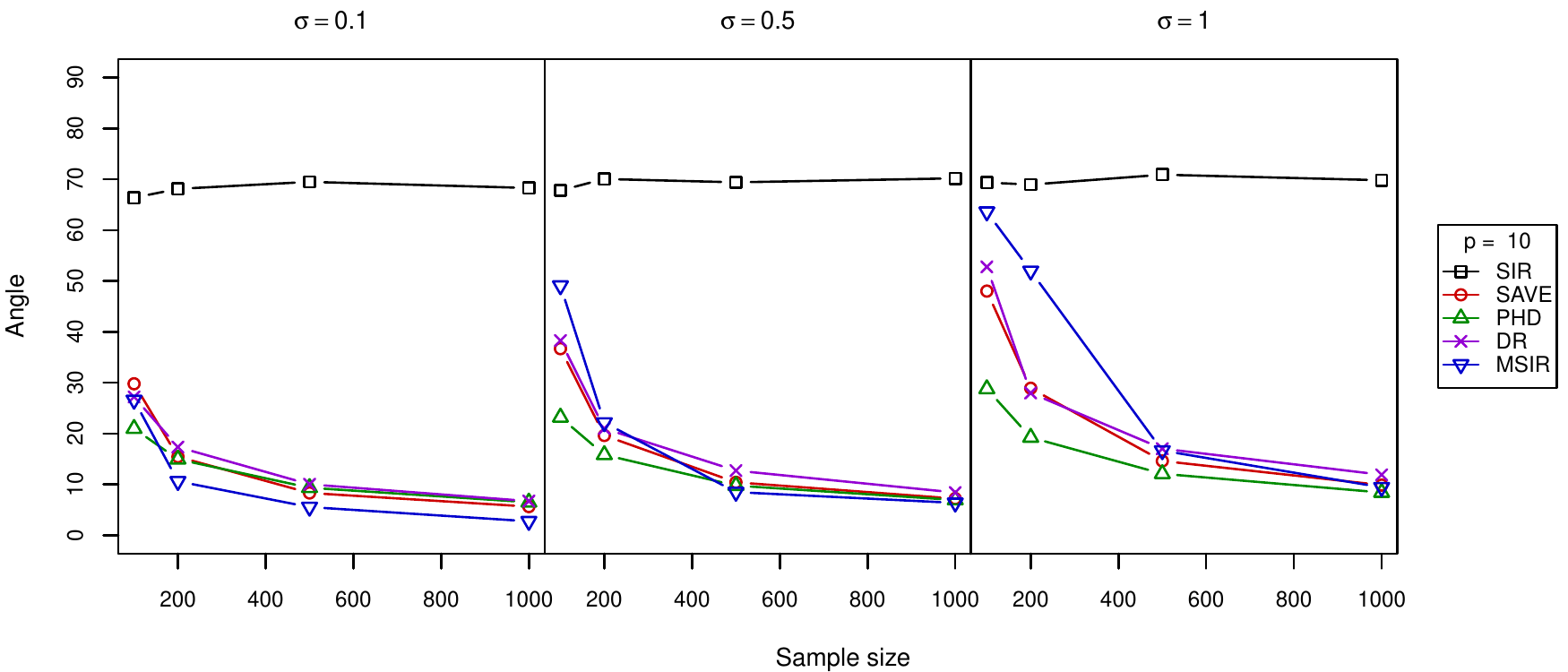}
\caption{Simulation results for \textit{model 1}: average maximum angle between true and estimated subspaces based on $500$ simulations for $p=10$ predictors at different sample sizes ($n$) and error standard deviation ($\sigma$).}
\label{fig:sim_data3}
\end{figure}

\smallskip\noindent\textbf{Model 2.}
% sim.data17
Consider the two-dimensional regression model
 \begin{equation*}
Y = \beta\T_1\X + (\beta\T_2\X)^2 + \sigma\epsilon,
\end{equation*}
where $\beta_1 = (1,0,\ldots,0)\T$, $\beta_2 = (0,1,0,\ldots,0)\T$, and the predictors and the error term follow independent standard normal distributions. This model has both a linear trend and a symmetric quadratic curve along two different directions. We expect SIR to be able to recover the first direction but not the second, whereas the opposite is expected for PHD. SAVE and DR should be able to recover both directions, but with a different degree of efficiency. 

Figure~\ref{fig:sim_data17} shows the results from a simulation study based on 500 repetitions for each combination of sample sizes ($n$) and error standard deviation ($\sigma$), with number of predictors $p=10$. Clearly, MSIR outperforms SIR and PHD in all these settings. Its accuracy is comparable to SAVE when $p=5$ (see Supplementary material) but, as $p$ increases, MSIR is much better than SAVE. The behavior of MSIR and DR are comparable, although DR tends to provide slightly better accuracy for small sample sizes, whereas MSIR tends to achieve better accuracy as sample size grows. 

\begin{figure}[htb]
\centering
\def\baselinestretch{1.1}
\includegraphics[width=\linewidth]{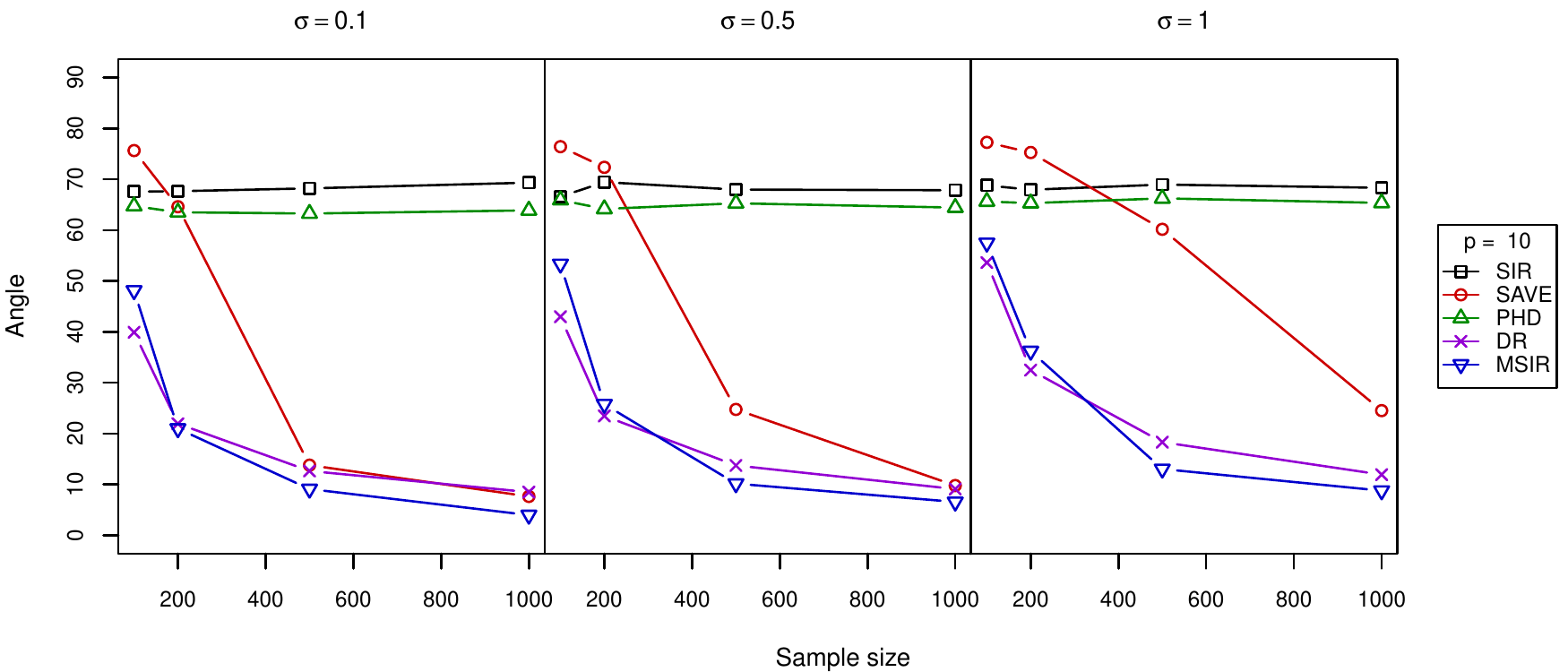}
\caption{Simulation results for \textit{model 2}: average maximum angle between true and estimated subspaces based on $500$ simulations for $p=10$ predictors at different sample sizes ($n$) and error standard deviation ($\sigma$).}
\label{fig:sim_data17}
\end{figure}

\smallskip\noindent\textbf{Model 3.}
% sim.data18
Consider the following two-dimensional model with response rational function:
\begin{equation*}
Y = \frac{\beta\T_1\X}{0.5 + (1.5 + \beta\T_2\X)^2} + (1 +\beta\T_2\X)^2 + \sigma\epsilon,
\end{equation*}
where $\beta_1 = (1,0,\ldots,0)\T$, $\beta_2 = (0,1,\ldots,0)\T$, with the predictors and the error term which follow independent standard normal distributions. The response surface for this model shows a noisy linear trend along the first direction and a strong non-symmetric curve along the second direction. 

The simulation results of some dimension reduction methods are shown in Figure~\ref{fig:sim_data18}. Overall, MSIR is slightly, but uniformly, more accurate than SIR or DR, which behave similarly, and it is much more accurate than SAVE and PHD. MSIR, SIR and DR all improve as sample size increases, and the same happens for SAVE, except when the number of predictors is large ($p=20$, see Supplementary material). PHD performs quite badly for this data-generating model. 

\begin{figure}[htb]
\centering
\def\baselinestretch{1.1}
\includegraphics[width=\linewidth]{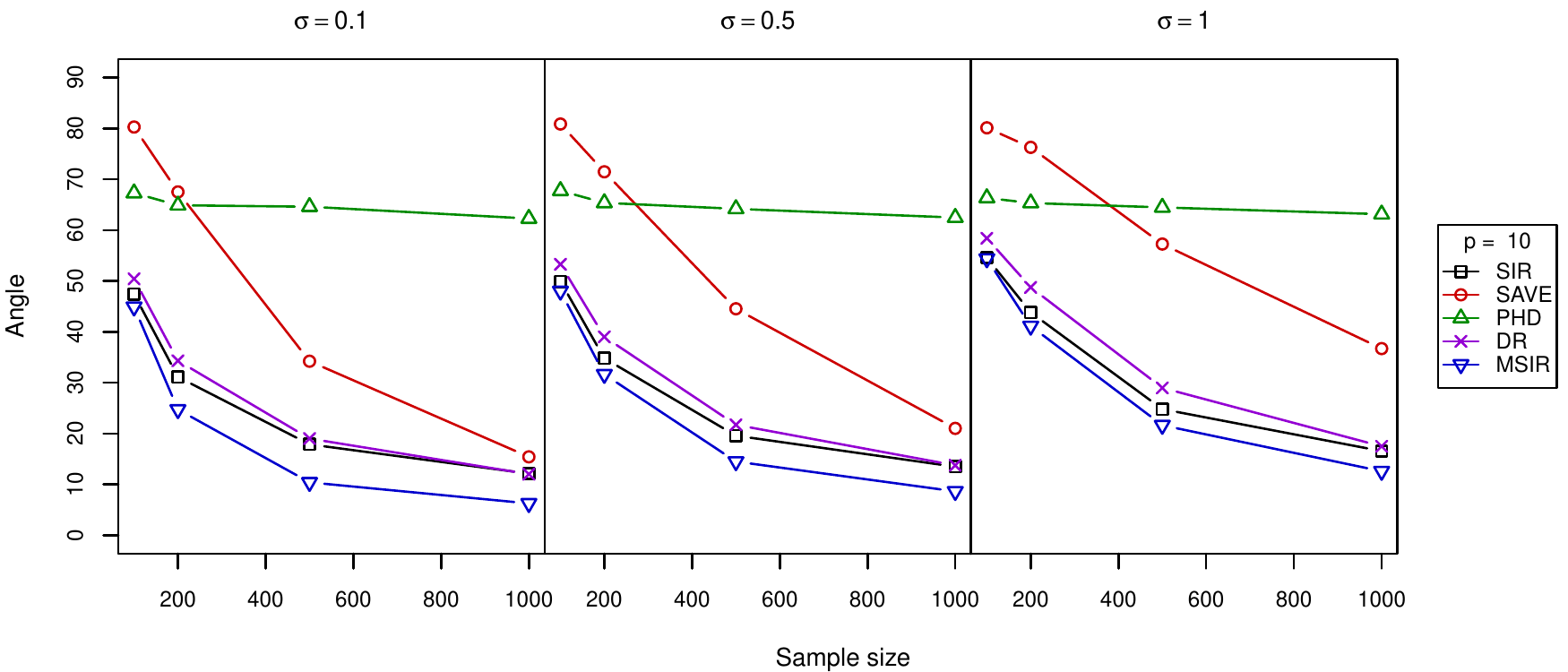}
\caption{Simulation results for \textit{model 3}: average maximum angle between true and estimated subspaces based on $500$ simulations for $p=10$ predictors at different sample sizes ($n$) and error standard deviation ($\sigma$).}
\label{fig:sim_data18}
\end{figure}

\smallskip\noindent\textbf{Model 4.}
% sim.data22
To investigate the performance of the MSIR estimator in the case of correlated predictors we consider the following response model:
\begin{equation*}
Y = 2\beta\T\X + (\beta\T\X)^2 + \epsilon,
\end{equation*}
where $\beta = (1, 1, 1, 0, \dots, 0)\T$ and $\epsilon \sim N(0,1)$, independent of covariates. Predictors vector $\X = (X_1, \dots, X_p)$ follows a standard multivariate normal distribution with correlation between $X_i$ and $X_j$ given by $\rho^{|i-j|}$. 

Simulation results are shown in Figure~\ref{fig:sim_data22}. In general, we note that MSIR is uniformly more accurate, i.e., it always achieves a smaller angle with the true subspace than the other dimension reduction methods. When the predictors are uncorrelated ($\rho=0$), SIR, PHD and DR all provide comparable accuracy, whereas SAVE quickly deteriorates as the number of predictors increases (see Supplementary material). 
As the correlation among predictors increases, the improvement of MSIR with respect to the other methods becomes larger.
DR and PHD show similar behavior, but SIR and SAVE appear to be the least efficient methods if highly correlated predictors are present.

\begin{figure}[htb]
\centering
\def\baselinestretch{1.1}
\includegraphics[width=\linewidth]{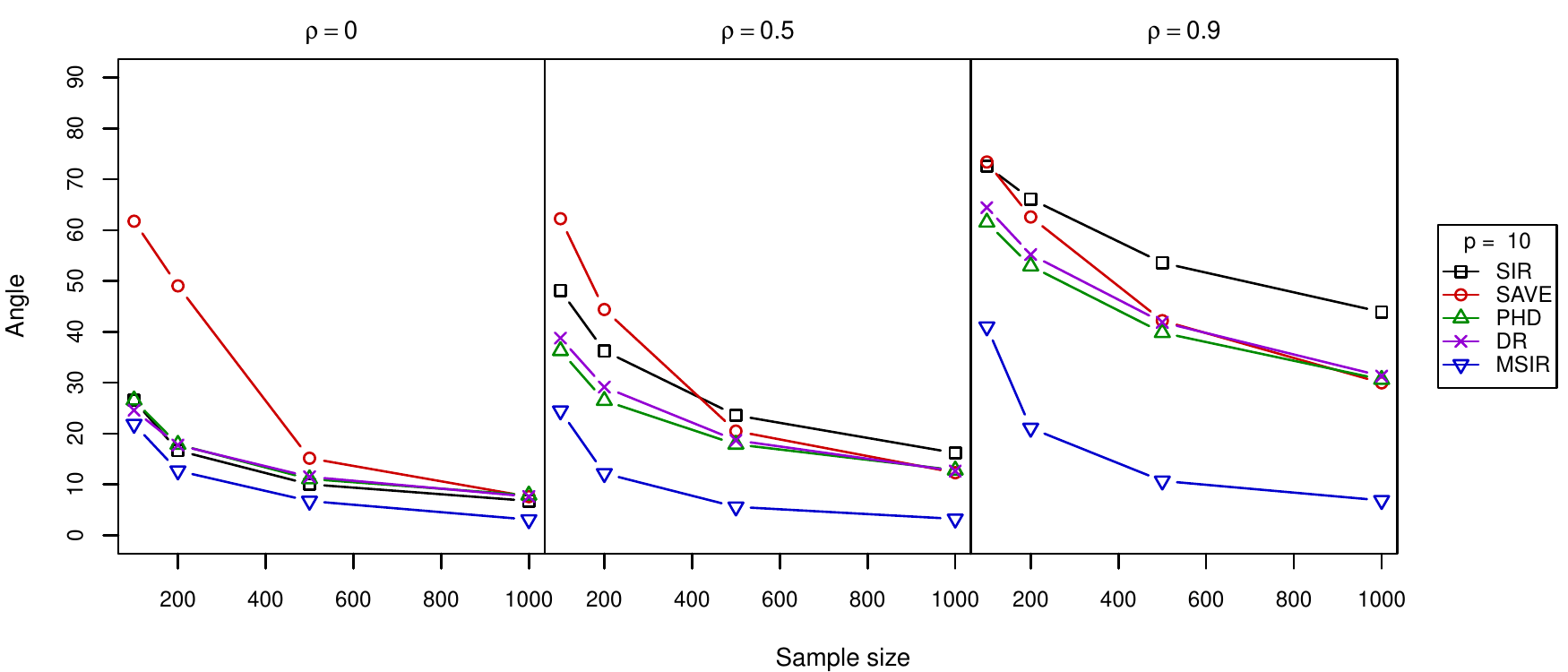}
\caption{Simulation results for \textit{model 4}: average maximum angle between true and estimated subspaces based on $500$ simulations for $p=10$ predictors at different sample sizes ($n$) and correlation coefficient ($\rho$).}
\label{fig:sim_data22}
\end{figure}

\smallskip\noindent\textbf{Model 5.}
% sim.data29
We now consider the model discussed by \citet[Example 6.5]{Li:Zha:Chiaromonte:2005}, i.e.,
\begin{equation}
Y = \frac{1}{2}(\beta\T\X - a)^2 \epsilon,
\end{equation}
where  $\beta = (1, 0, \dots, 0)\T$, $\X \sim  N(0, I_{10})$ and $\epsilon \sim N(0,1)$, independent of predictors. Here, only the variance of $Y$ depends on the predictors and, in particular, it is a quadratic function of $X_1$ centered on values $a = \{0, 0.5, 1\}$. 
Since PHD is not capable of estimating a direction which only appears in the variance function \citep{Cook:Li:2002}, we expect PHD to perform poorly for this model. This should also happen for SIR when $a=0$, since in this case the function is symmetric around the origin. 

Figure~\ref{fig:sim_data29} shows the results of a simulation study based on 500 replications. When $a=0$, SAVE and DR perform very similarly, whereas MSIR improves as sample size increases, achieving the smallest angle when $n \ge 500$. As expected, in this case, neither SIR or PHD can estimate the true subspace. When $a$ increases to $0.5$, the performance of SAVE worsens and DR achieves the smallest angle for small sample sizes. MSIR closely follows DR and, again, it appears to be the best method for large sample sizes. In this case, SIR greatly improves with respect to the previous case, but PHD does not improve at all. When $a=1$, SIR achieves the best performance for small samples, very closely followed by MSIR and then by DR. SAVE needs large sample sizes to achieve comparable accuracy, and PHD is still the worst method.
Overall, we note that, provided that sample size is moderate to large, MSIR can provide an accurate estimate of the dimension reduction subspace in different settings when the dependence only appears in the variance function.

\begin{figure}[htb]
\centering
\def\baselinestretch{1.1}
\includegraphics[width=\linewidth]{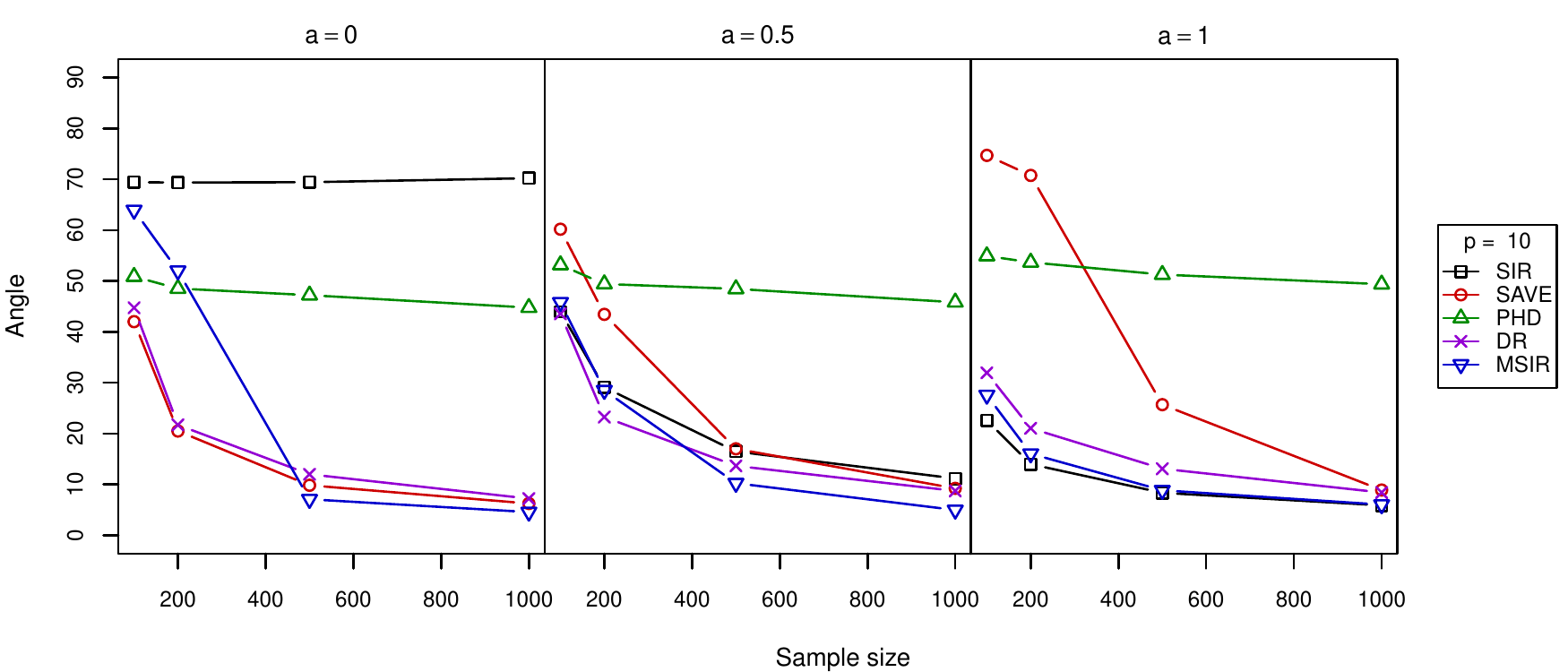}
\caption{Simulation results for \textit{model 5}: average maximum angle between true and estimated subspaces based on $500$ simulations for $p=10$ predictors at different sample sizes ($n$) and constant $a$.}
\label{fig:sim_data29}
\end{figure}

\subsection{Sensitivity of MSIR algorithm to number of slices}
\label{sec:numberofslices}

The number of slices acts as a tuning parameter, like the span width or kernel bandwidth in smoothing approaches.
Estimation of MSIR, like that of SIR, is not overly sensitive to the choice of the number of slices. However, we must ensure a sufficient number of observations within any slice to fit finite mixture models. By default, we use $H=\max(3,\lfloor\log_2(n/\sqrt{p})\rfloor)$ number of slices, where $\lfloor u \rfloor$ indicates the largest integer not greater than $u$. The resulting number of slices depends on both the amount of data available and the dimension of the predictor space (see Figure~\ref{fig:msir_nslices}). In order to have a large number of slices, we need either a large sample or a small number of predictors; for a fixed number of predictors, the number of slices increases as sample size increases. 

\begin{figure}[htb]
\centering
\def\baselinestretch{1.1}
\includegraphics[width=0.6\linewidth]{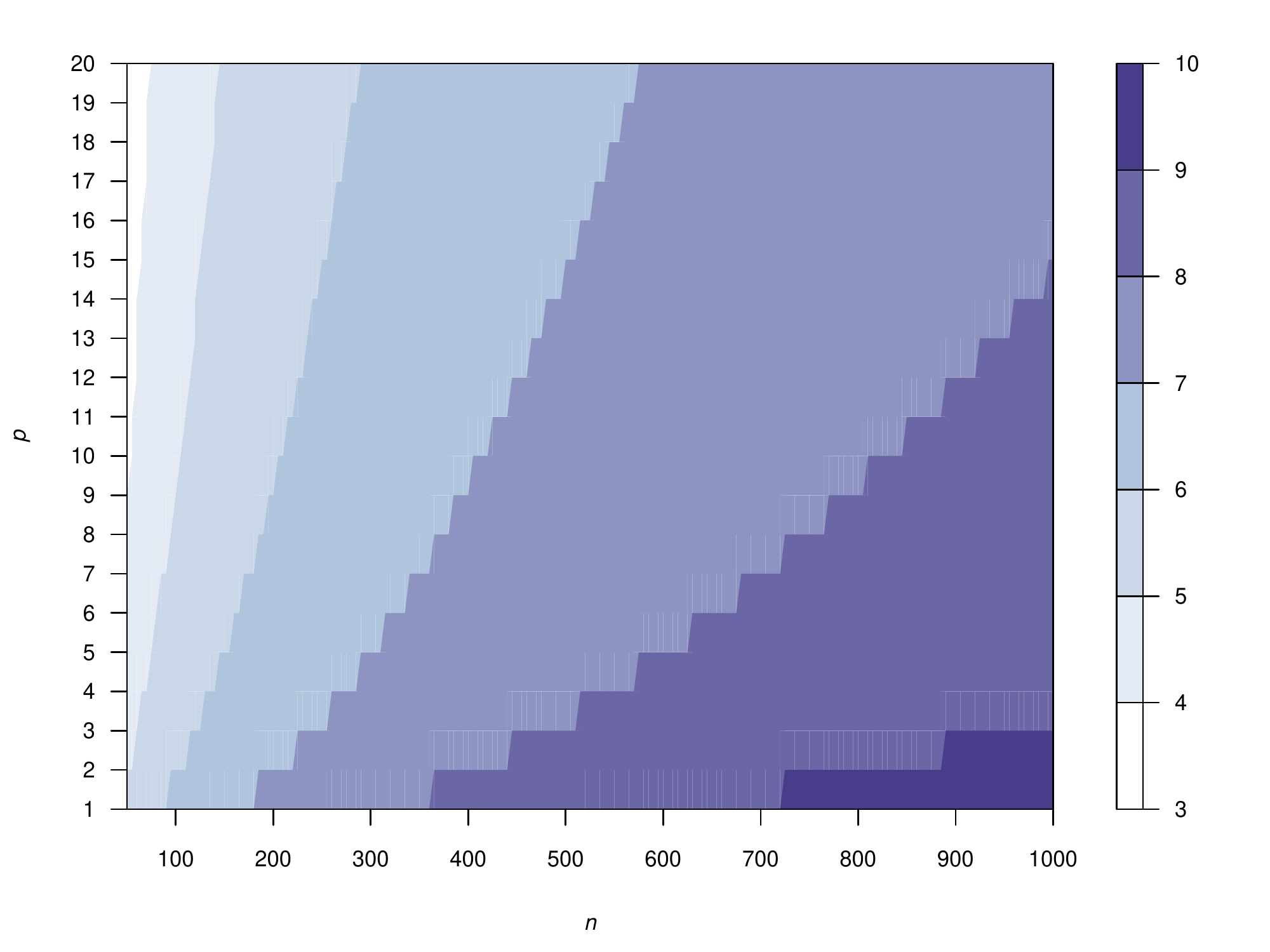}
\caption{Default number of slices used in MSIR algorithm as a function of sample size $n$ and number of predictors $p$.}
\label{fig:msir_nslices}
\end{figure}

One natural concern involves the sensitivity of the MSIR algorithm with respect to tuning parameter $H$. 
To address this issue, a simulation study was conducted in which, for models 1--4 described in Section~\ref{sec:simulations1}, we assessed the ability of MSIR to recover the true subspace when both sample size and number of slices vary. 
We set $p=5$ for the first three models with $\sigma=0.1$, and $p=10$ with $\rho=0.5$ for model 4.
% The ability of MSIR to recover the CDRS was assessed by computing the angle between the true subspace and the estimated subspace, i.e. $\alpha = \arcsin(\Delta(\hat{\B}_{\mathrm{MSIR}}, \B)) 180/\pi$.

Figure~\ref{fig:sim_data_h} shows the results of this simulation study. 
In general, the behavior of MSIR is quite stable, as long as we allow for enough observations within slices. 
For the first model, when $n=100$, the distributions are similar up to $H=5$, and over $H>5$ the angles become very large. When $n=200$, the break-point is at $H=10$, but is at $H=20$ when $n=500$, and at a value larger than 30 for samples of size $n=1000$. 
These characteristics are also found in the results for the second and fourth models, the third model shows a more stable distribution across values of $H$. 
Figure~\ref{fig:msir_nslices} indicates that the default number of slices is $H = (5, 6, 7, 8)$ when, respectively, $n = (100, 200, 500, 1000)$ for the first three models, and $H = (4, 5, 7, 8)$ for the last model. These values are shown as vertically shaded bars in Figure~\ref{fig:sim_data_h}, and seem to provide reasonable defaults.

\begin{figure}[htb]
\centering
\def\baselinestretch{1.1}
\includegraphics[width=0.9\linewidth]{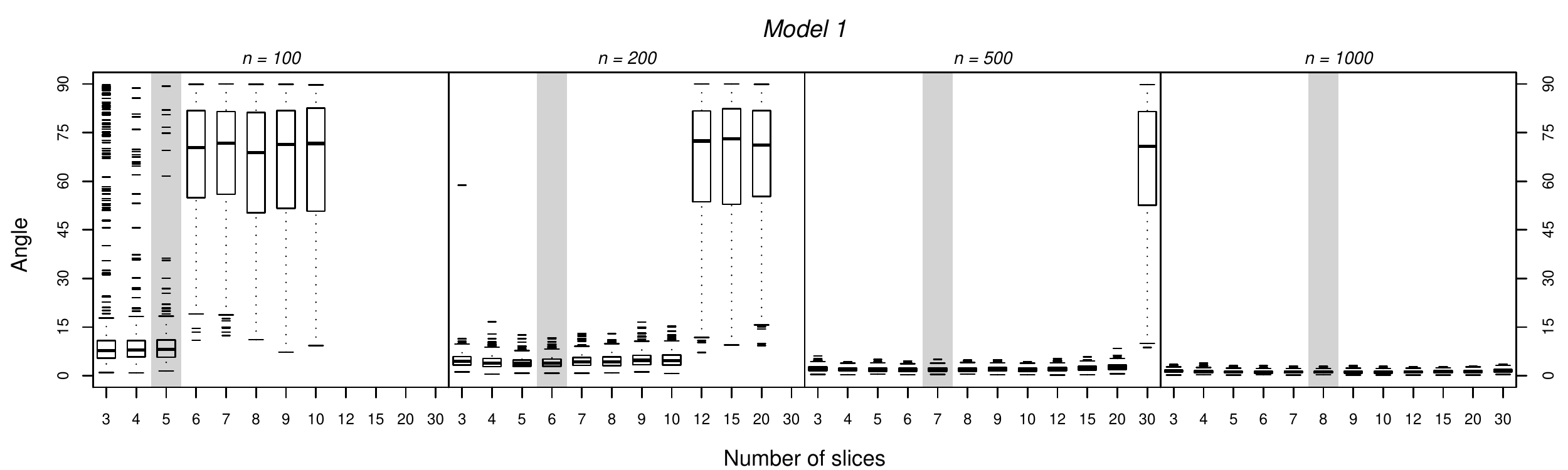}
\includegraphics[width=0.9\linewidth]{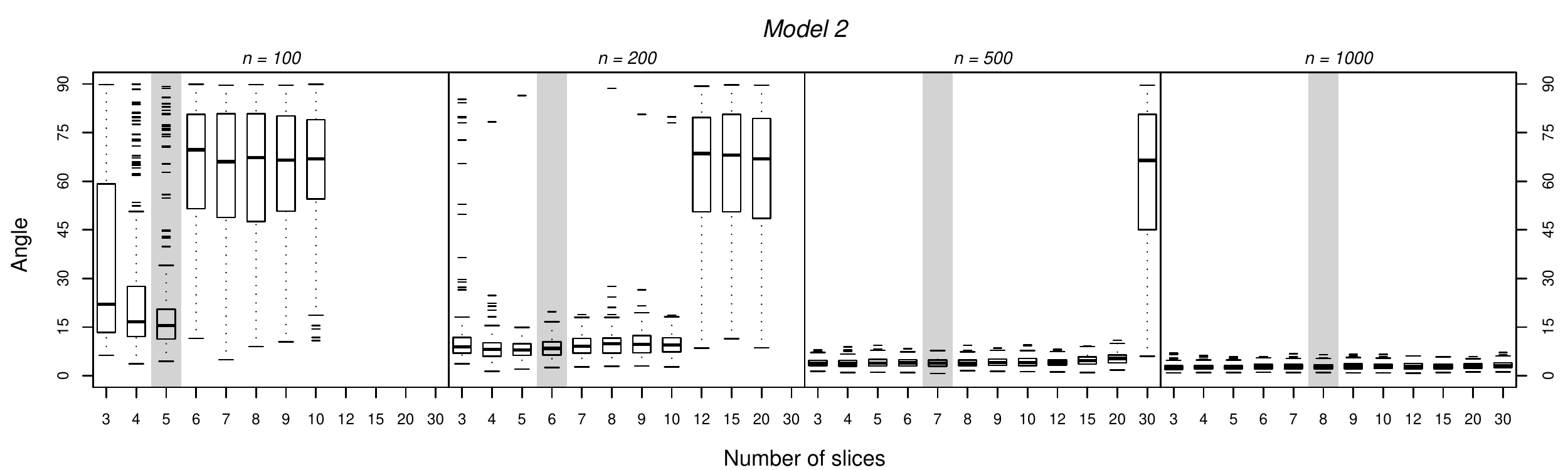}
\includegraphics[width=0.9\linewidth]{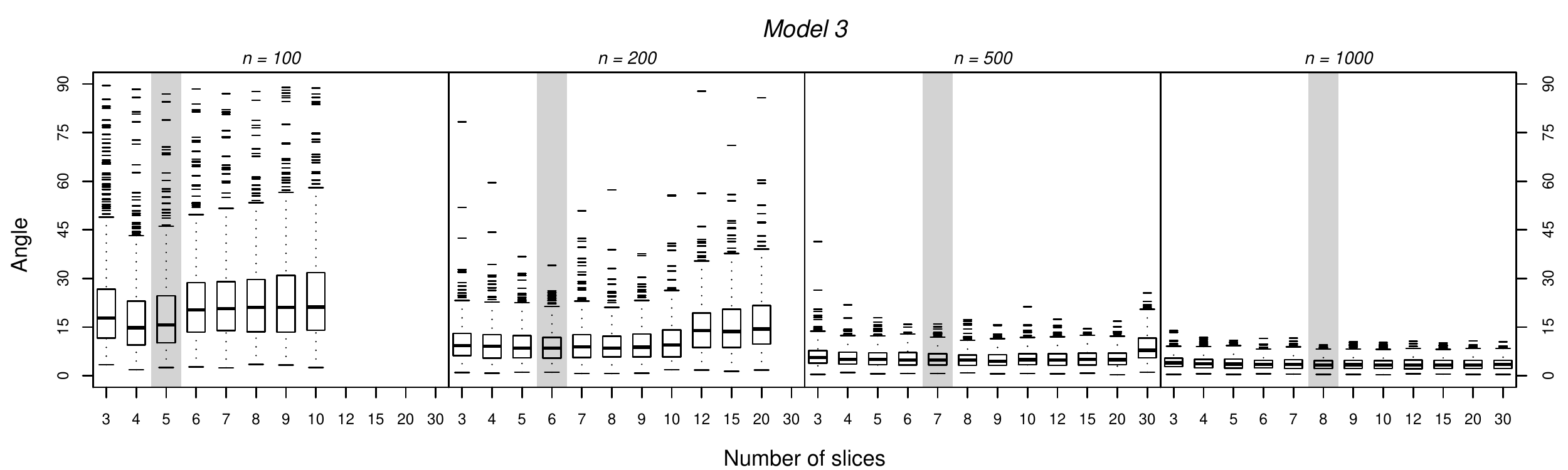}
\includegraphics[width=0.9\linewidth]{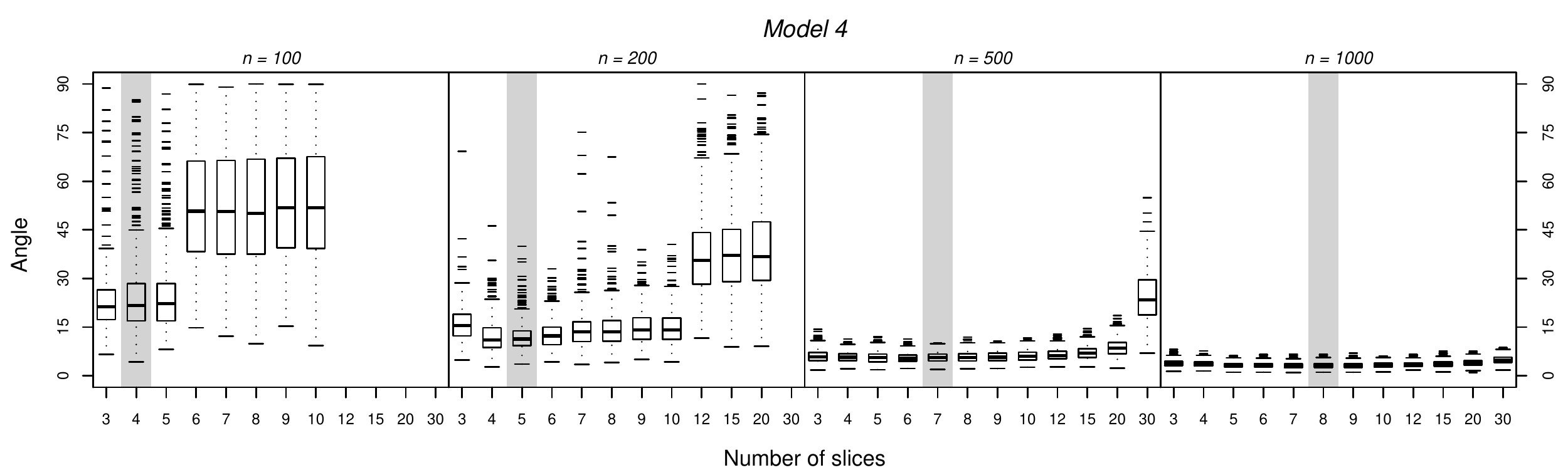}
\caption{Angles (degrees) between true subspace and MSIR estimated subspaces for model discussed in Section~\ref{sec:simulations1} as a function of number of slices ($H$) with increasing sample sizes ($n$). 
Compare with default values for $H$, in Figure~\ref{fig:msir_nslices},  represented by vertical shading bars.}
\label{fig:sim_data_h}
\end{figure}

\subsection{Computing time}

Table~\ref{tab:computing_time} gives the CPU times (in seconds) required by MSIR and other dimension reduction methods for data generated from Model 1 in Section~\ref{sec:simulations1} with different numbers of predictors ($p$) and sample sizes ($n$).
The calculations are performed in \textsf{R} \citep{RStatSoft} with a 2.2 GHz Intel Core 2 Duo Macbook Pro with 2GB RAM. Clearly, MSIR needs more computing time than the other methods, particularly as sample size increases. This is mainly because MSIR needs to estimate several mixture models via the EM algorithm and to perform model selection within each slice, in order to choose the appropriate parameterization and number of components.

\begin{table}[htb]
\centering
\caption{Comparison of computing times (in seconds)}
\label{tab:computing_time}
\begin{tabular}{lrrrrrr}
\toprule
$p$ & $n$ & SIR & SAVE & PHD & DR & MSIR \\ 
\midrule
   &  100 & 0.012 & 0.012 & 0.008 & 0.108 &  0.186 \\ 
10 &  500 & 0.020 & 0.020 & 0.013 & 0.517 &  3.535 \\ 
   & 1000 & 0.031 & 0.031 & 0.023 & 1.025 & 24.350 \\ 
\midrule
   &  100 & 0.021 & 0.021 & 0.010 & 0.265 &  0.196 \\ 
20 &  500 & 0.037 & 0.039 & 0.024 & 1.218 &  3.964 \\ 
   & 1000 & 0.058 & 0.058 & 0.042 & 2.455 & 32.307 \\ 
\bottomrule
\end{tabular}
\end{table}

\clearpage
\section{Determination of dimension of CDRS}
\label{sec:dim}

Assessing the dimension of the CDRS is an important question in any dimension reduction method.
A plot of $Y$ versus the first few MSIR predictors $\hat{Z}_j = \hat{\betab}_{j}\T\Xobs$, where $\hat{\B}_{\textrm{MSIR}} = (\hat{\betab}_1, \hat{\betab}_2, \ldots)$, is usually very informative, but inference on 
%$\dim(\Space_{Y|\X})$ 
the dimension of the CDRS is still required. A popular method is based on the sequential chi-square test proposed by \cite{Li:1991}, whereas a more recent approach is based on a BIC-type criterion. 
In this section, we discuss how to apply these two methods in the MSIR case.

\subsection{Permutation test}
\label{sec:dimpermtest}

\cite{Li:1991} proposed a sequential test procedure for SIR based on the statistic
\begin{equation}
\hat{\Lambda}_{d} = n \sum_{j=d+1}^{p} \hat{\lambda}_{j},
\label{test-stat}
\end{equation}
which, under the assumption that the predictors are normally distributed, has an asymptotic chi-square distribution with $(p-d)(H-d-1)$ degrees of freedom. %\cite{Cook:1998} showed that if the normality assumption is not fulfilled, the test statistic is distributed as a linear combination of chi-squared random variables with coefficients which have to be estimated for use in practice.
In general, chi-square asymptotic distribution holds for any distribution of the predictors under the linearity and constant covariance conditions \citep{Bura:Cook:2001}.
For other dimension reduction methods, for instance SAVE, the null distribution of statistic \eqref{test-stat} is unknown, even asymptotically.
In these cases, and for SIR when the linearity and constant covariance conditions are not satisfied, \cite{Cook:Weisberg:1991} and \cite{Cook:Yin:2001} proposed a general permutation test which can be easily adapted to our case. 

Consider partition $\B = (\B_{1},\B_{2})$ of the $(p \times p)$ matrix of eigenvectors of population kernel matrix $\M$, where $\B_{1}=(\betab_{1},\ldots,\betab_{d})$ and $\B_{2}=(\betab_{d+1},\ldots,\betab_{p})$. Assume that the independence condition between $(Y, \B\T_{1}\Xobs)$ and $\B\T_{2}\Xobs$ holds for testing hypothesis $H_0: \rank(\M) \le d$ versus $H_1: \rank(\M) > d$.
The observed test statistic $\hat{\Lambda}_{d}$ (for $d=0,1,\ldots,p-1$) can be compared to its permutation distribution under the null hypothesis. Starting with $d=0$, the test procedure is performed sequentially. If the null hypothesis is not rejected for a given value of $d$, then the last $(p-d)$ MSIR predictors $\B\T_{2}\Xobs$ can be discarded without loss of information on the regression of $Y$ on $\Xobs$. Thus, the testing procedure involves the following steps:
\begin{enumerate}
\setlength{\itemindent}{1mm}
\setlength{\topsep}{0mm}
\setlength{\partopsep}{0mm}
\setlength{\parsep}{0mm}
\setlength{\parskip}{0mm}
\setlength{\itemsep}{0mm}
\item for a given sample kernel matrix $\hat{\M}$, compute the eigendecomposition in \eqref{msir:decomp} to obtain eigenvectors $\hat{\B}_{1}=(\hat{\betab}_{1},\ldots,\hat{\betab}_{d})$ and $\hat{\B}_{2}=(\hat{\betab}_{d+1},\ldots,\hat{\betab}_{p})$, with associated eigenvalues $\hat{\lambda}_1,\ldots,\hat{\lambda}_d$ and $\hat{\lambda}_{d+1},\ldots,\hat{\lambda}_p$;

\item compute the observed value of test statistic $\hat{\Lambda}_{d}$;

\item obtain the vectors of sample MSIR predictors $\Zest_{i1} = \hat{\B}\T_1\Xobs_i$ and $\Zest_{i2} = \hat{\B}\T_2\Xobs_i$, for $i=1,\ldots,n$;

\item randomly permute indices $i$ of $\Zest_{i2}$ to obtain permuted data $\Zest_{i^*2}$;

\item apply the MSIR procedure to original data $Y_i$, $\Zest_{i1}$ and permuted data $\Zest_{i^*2}$, to obtain the value of permuted test statistic $\hat{\Lambda}^*_{d}$;

\item repeat steps 4 and 5 a number of times. The p-value for testing the null hypothesis is estimated as the fraction of $\hat{\Lambda}^*_{d}$ exceeding $\hat{\Lambda}_{d}$.

\end{enumerate}
For $d=0,1,\ldots,p-1$, we test $\rank(\M)$ sequentially, and estimate $\hat{d} = d_0$ if $d_0$ is such that the corresponding $p$-value is the first one greater than a fixed significance level, say $\alpha=0.05$, in the series. If we reject all the hypotheses, we conclude that $\rank(\M) = p$.

\subsection{BIC-type criterion}
\label{sec:dimbic}

\cite{Zhu:Miao:Peng:2006} and \cite{Zhu:Zhu:2007} proposed a consistent BIC-type procedure to determine the dimension of the CDRS. 
Let $\Omegab = \Gammab + \I_p$ and $\hat{\Omegab} = \hat{\Gammab} + \I_p$, where $\Gammab$ is the kernel matrix for standardized predictors and $\I_p$ is the $(p \times p)$ identity matrix. 
Let $\theta_1 \ge \theta_2 \ge \ldots \ge \theta_p$ be the eigenvalues of $\Omegab$ and $\hat{\theta}_1 \ge \hat{\theta}_2 \ge \ldots \ge \hat{\theta}_p$ those of $\hat{\Omegab}$. 
Clearly, $\theta_i = \lambda_i + 1$, where $\lambda_i$ are the eigenvalues of $\Gammab$, and the dimension of the CDRS is given by the number of eigenvalues of $\Omegab$ greater than 1. 
\cite{Zhu:Miao:Peng:2006} showed that a BIC-type criterion can be defined as follows
\begin{equation*}
G(d) = \log L_d - C(n,p,d),
\end{equation*}
where $\log L_d = \frac{n}{2} \sum_{i=1+\min(\tau,d)}^{p} (\log(\hat{\theta_i}) + 1 - \hat{\theta_i})$, with $\tau$ denoting the number of $\hat{\theta}_i>1$, and $C(n,p,d)$ is a penalty term which depends on the number of free parameters to be estimated. 
In the original proposal the penalty term was defined as
$C(n,p,d) = C_n d(2p-d+1)/2$, with $C_n = (0.5\log(n) + 0.1 n^{1/3})/(2(n/H))$, where $n/H$ is the average number of data points within each slice. 
However, this definition of the penalty term was based on favorable empirical evidence among a candidate set of penalty terms. 
Later, \cite{Zhu:Zhu:2007} noted that the number of $\theta_i$ to be estimated are $(p-d)$, and suggested the use of the penalty
$C(n,p,d) = -(p-d)\log(n)$.
The dimension of the CDRS is then estimated as the maximizer of $G(d)$, i.e.
$\hat{d} = \argmax_{0 \le d \le p-1} \; G(d)$.
This BIC-type procedure for selecting the dimension of the CDRS is easily applied to the MSIR approach by setting $\hat{\Gammab} = \hat{\Sigmab}^{1/2} \hat{\M} \hat{\Sigmab}^{1/2}$.

\subsection{Simulation study}
\label{sec:simulations2}

We conducted a simulation study using the first four models described in Section~\ref{sec:simulations1} to investigate the accuracy of the permutation test (PT) procedure and the BIC-type criterion in choosing the correct dimension of the CDRS. 
Figure~\ref{fig:simdim} shows the results of these simulations, plotting fractions $F(i)$ and $F(i,j)$ based on 500 replications, in which a procedure (PT or BIC) selected $d=i$ and $d=i$ or $d=j$ versus sample size. To simplify the discussion, only the results for case $p=10$ are reported.

For the first model, which has $d=1$, the PT procedure tends to select the correct value as sample size increases. When sample size is small and there is a large amount of noise, the procedure sometimes underestimates the true dimension. The behavior of BIC is similar to that of PT, except for when $n=1000$ and $\sigma=0.1$, in which case it overestimates the dimensionality.
For the second model, which has $d=2$, the BIC-type criterion greatly improves as sample size increases, whereas the PT procedure is more accurate for small sample sizes. The noise component does not seem to affect the accuracy of either procedures. On the contrary, it has a large effect for model 3, which also has $d=2$. In this case, both PT and BIC worsen as $\sigma$ increases: in particular, they tend to select only one direction as relevant.
For the last model, where $d=1$, the PT procedure performs well, and the BIC-type criterion is comparable when the predictors are uncorrelated or very strongly correlated but, if $\rho=0.5$, it tends to overestimate the true dimensionality. 

Overall, both procedures provide reliable estimates of the dimension of the CDRS. The permutation test procedure is more accurate when sample size is not large, whereas the BIC-type criterion is more efficient as sample size increases.

\begin{figure}[htb]
\centering%\small
\linespread{1.1}
%\hspace*{0.15\linewidth}\qquad\textit{Model 1}\newline
\includegraphics[width=0.7\linewidth]{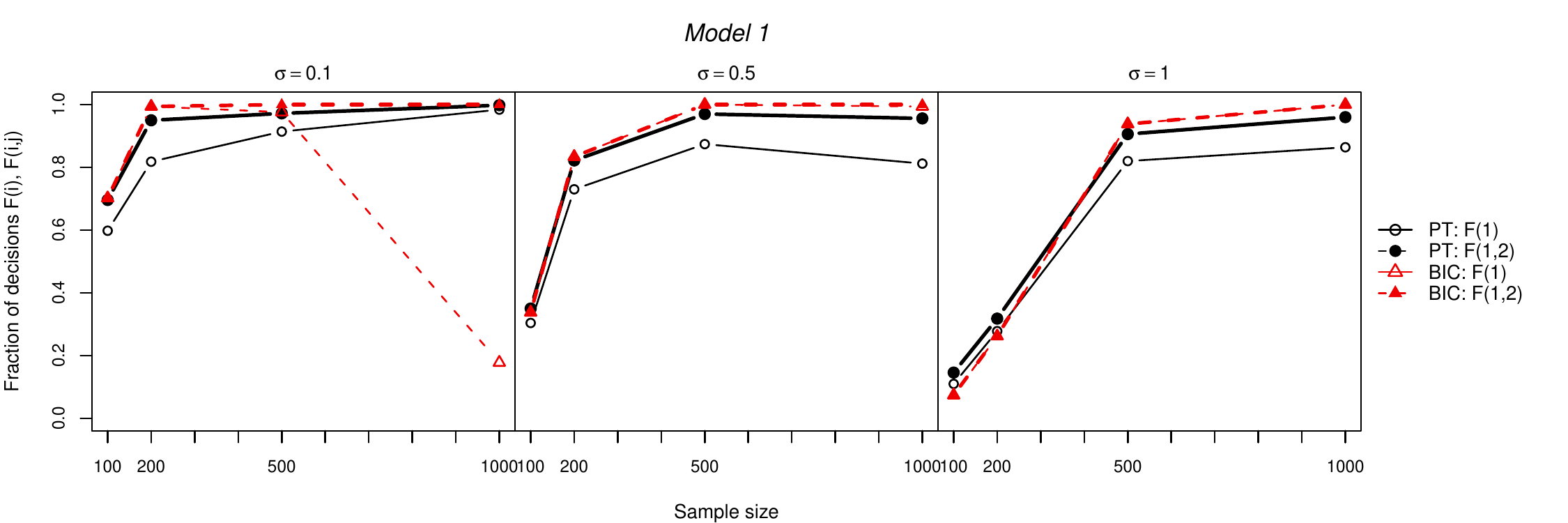}
%\smallskip
%\hspace*{0.15\linewidth}\qquad\textit{Model 2}\newline
\includegraphics[width=0.7\linewidth]{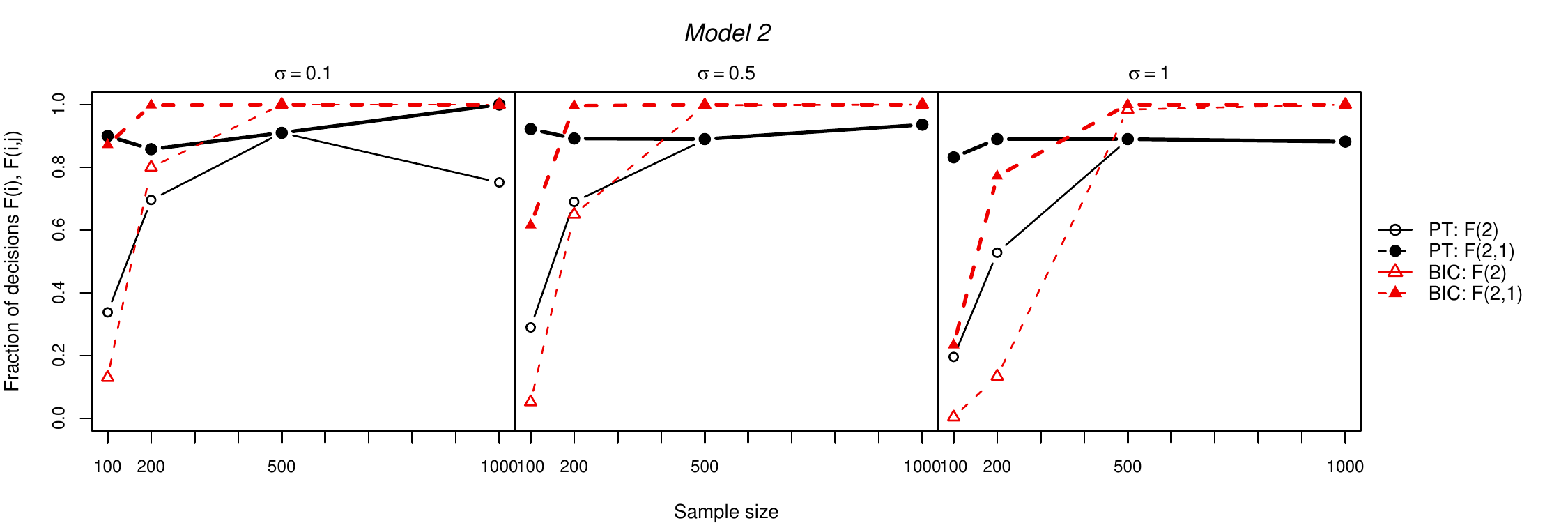}
%\smallskip
%\hspace*{0.15\linewidth}\qquad\textit{Model 3}\newline
\includegraphics[width=0.7\linewidth]{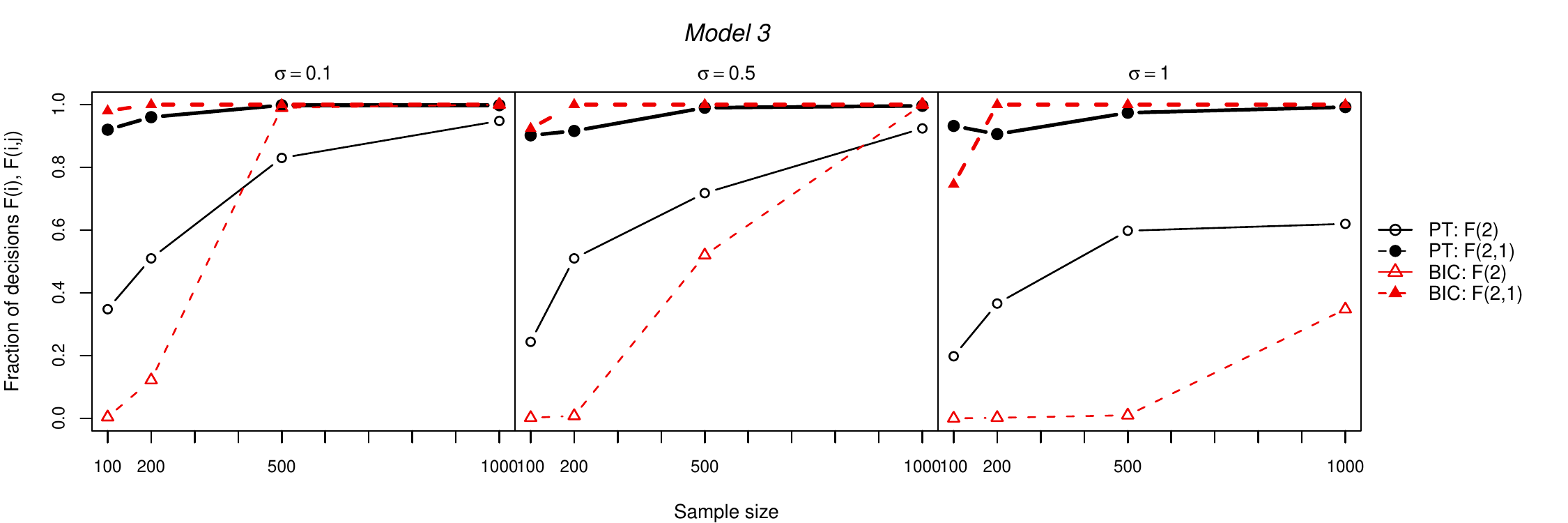}
%\smallskip
%\hspace*{0.15\linewidth}\qquad\textit{Model 4}\newline
\includegraphics[width=0.7\linewidth]{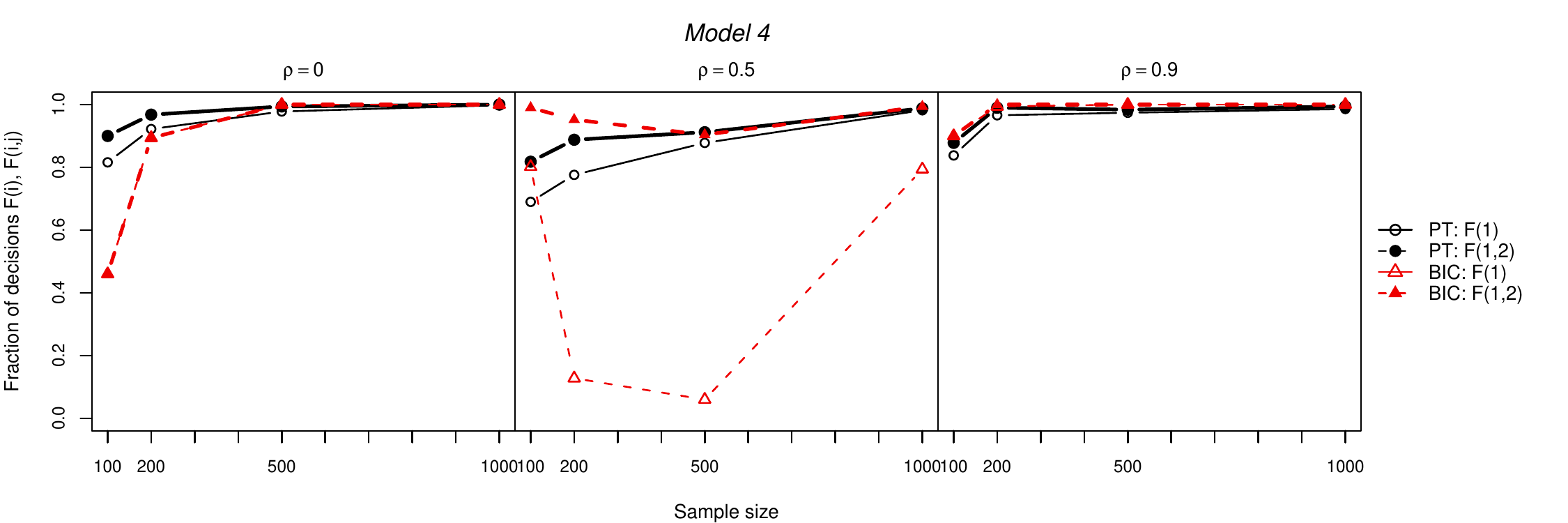}
\caption{Inference about $d$ from simulations for four models described in Section~\ref{sec:simulations1}. $F(i)$, $F(i,j)$ are fractions of runs in which estimated $d$ was one of the arguments.}
\label{fig:simdim}
\end{figure}

\clearpage

\section{Data analysis}

\subsection{Chicago air pollution data}

Atmospheric pollutants are responsible for serious environmental pollution and may have dangerous effects on public health.
Pollutants are often classified as either primary or secondary. Primary pollutants are released into the atmosphere during combustion processes of any kind (volcanic eruptions, motor vehicle exhausts, etc.), and include carbon monoxide \txt{CO}, nitrogen dioxide \txt{NO2}, sulfur dioxide \txt{SO2}, and particulate matter with diameter smaller than 10 microns \txt{PM10}.
After their release into the atmosphere, primary pollutants are subject to processes of diffusion, transport and deposition. They also undergo processes of chemical and physical transformation, which may lead to the formation of secondary pollutants. These are formed from primary pollutants as a result of changes of various kinds caused by reactions which often involve atmospheric oxygen and weather conditions. Of main interest is the ground level of ozone (O3) which, at abnormally high concentrations, caused by human activities (mainly the combustion of fossil fuel) is a dangerous pollutant.

\begin{table}[htb]
\centering
\caption{Model-based SIR results for air pollution data.}
\label{tab1:chic97airpollution}
%\renewcommand\arraystretch{0.8}
%\small
%%%%%%%%%%%%%%%%%%%%%%%%%%%%%%%%%%%%%%%%%%%%%%%%%%%%%%%%%%%%%%%
\begin{tabular*}{0.87\textwidth}{lccccccc}
\toprule
Slices     & 1   & 2         & 3   &  4       & 5       & 6   & 7 \\ 
GMM        & XXX & EEI       & VVV &  VEI     & VEI     & XXX & XXX \\
Num. comp. & 1   & 4         & 2   &  3       & 3       & 1   & 1 \\ 
Num. obs.  & 52  & 5$|$23$|$7$|$17 & 45$|$7&  13$|$31$|$8 & 26$|$3$|$23 & 52  & 51 \\
\end{tabular*}
%%%%%%%%%%%%%%%%%%%%%%%%%%%%%%%%%%%%%%%%%%%%%%%%%%%%%%%%%%%%%%%
\begin{tabular*}{0.87\textwidth}{lrrrrrr}
\midrule
Predictors & \multicolumn{6}{c}{Standardized basis} \\
\cline{2-7}
 & Dir$_1$ & Dir$_2$ & Dir$_3$ & Dir$_4$ & Dir$_5$ & Dir$_6$ \\ 
\cline{2-7}
\txt{T}    &  0.6824 &  0.15446 &  0.00996 & -0.15674 &  0.6137 & -0.1132\\
\txt{H}    & -0.1307 & -0.07566 & -0.40980 &  0.48761 &  0.4306 &  0.2151\\
\txt{PM10} &  0.1189 & -0.48158 & -0.38248 &  0.38666 & -0.4828 & -0.5056\\
\txt{SO2}  & -0.1406 & -0.43371 & -0.44994 & -0.59750 &  0.2859 &  0.3660\\
\txt{NO2}  &  0.6204 &  0.37996 &  0.13235 &  0.47879 & -0.2149 &  0.6795\\
\txt{CO}   & -0.3136 & -0.63719 &  0.68243 & -0.04374 &  0.2775 & -0.2994\\
\midrule
\end{tabular*}
%%%%%%%%%%%%%%%%%%%%%%%%%%%%%%%%%%%%%%%%%%%%%%%%%%%%%%%%%%%%%%%
\begin{tabular*}{0.87\textwidth}{lrrrrrrr}
Eigenvalues  & 0.7381 &  0.4514 & 0.1828 & 0.1371 & 0.09066 & 0.04821 \\
Structural dimension &      0 &     1 &     2 &     3 &     4 &    5 \\  
BIC-type criterion   & -17.77 & 9.974 &  18.4 & 15.21 & 10.88 & 5.69 \\
Test statistic       & 598.4  & 330.4 & 166.5 & 100.2 & 50.41 & 17.5 \\
Permutation \textit{p-value} & 0      & 0.01  & 0.25  & 0.29  & 0.36  & 0.33 \\
\bottomrule
\end{tabular*}
%%%%%%%%%%%%%%%%%%%%%%%%%%%%%%%%%%%%%%%%%%%%%%%%%%%%%%%%%%%%%%%
\end{table}

We considered daily data collected in Chicago in 1997 and available at \url{http://www.ihapss.jhsph.edu/data/data.htm}. We aimed at modeling ozone concentration \txt{Y} on some primary pollutants and weather conditions (temperature \txt{T} and humidity \txt{H}). The results from MSIR estimation are shown in Table~\ref{tab1:chic97airpollution}: the first part of the table lists the type of GMM fitted for each slice \citep[for the meaning of symbols, see][]{Fraley:Raftery:2006Mclust}, the number of mixture components, and the number of observations for each within-slice component.
The second part of the table shows the predictor coefficients, scaled to have standard deviation equal to one, associated with the estimated directions.  
The eigenvalues of the MSIR kernel matrix are also shown, together with the BIC-type criterion and the permutation test described in Section~\ref{sec:dim}. Both methods indicate a two-dimensional structure. 

The plot of the response variable versus the first two MSIR variates are shown in Figure~\ref{fig1:chic97airpollution}, where smooth functions for mean and variance have been added as described in \citet[pp. 275--278]{weisberg:2005}.
A rotating 3D plot is also available in the Supplementary material.
An increasing trend with constant variance is associated with the first MSIR direction, which is mainly determined by predictors \txt{T} and \txt{NO2}. Thus, an increase in ozone level is associated with increasing values of temperature and nitrogen dioxide. 
The second direction shows a curved relationship, with non-constant variance. However, its interpretation is less straightforward: there is a positive relationship with \txt{T} and \txt{NO2}, as in the first direction, but an inverse relationship with the other predictors, especially \txt{PM10}, \txt{SO2} and \txt{CO}. 

When we compare the estimated MSIR directions with those obtained by other dimension reduction methods, we can see that the first MSIR variate has $R^2 \approx 0.98$ with the first SIR variate, and $0.92$ with the first DR variate. Therefore, the three methods essentially identify the same direction. 
In contrast, the second MSIR variate has an $R^2$ of about $0.5$ with the second variate estimated by both SIR and DR. Therefore, although these directions are different, they all show a heteroskedastic shape.

\begin{figure}[htb]
\centering
\def\baselinestretch{1.1}
\includegraphics[width=0.9\linewidth]{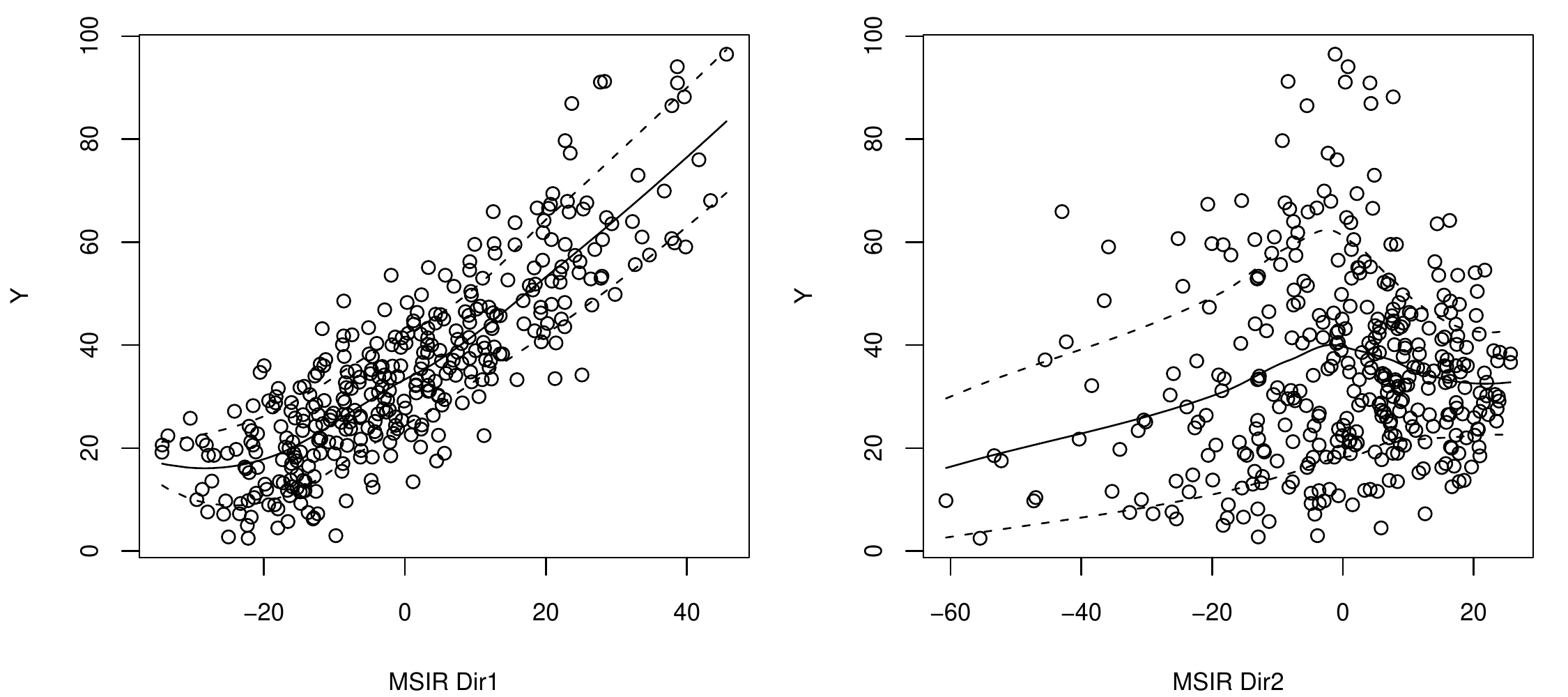}
\caption{Summary plots of ground level ozone concentration ($Y$) versus first two estimated MSIR directions, with smooth functions for mean and variance.}
\label{fig1:chic97airpollution}
\end{figure}

\subsection{Pen digit data}

The data for this pattern recognition problem on handwritten digits come from the UCI machine-learning repository and contain samples of handwritten digits $\{0,1,\ldots,9\}$ collected from 44 different writers. Each digit is stored as a 16-dimensional vector. 
The data set is divided into a training set and a learning set. We focus on the data involving three digits, \{0, 6, 9\}. Because of their similar shape, they are among the most difficult to identify. 
These data were analysed by \citet{Zhu:Hastie:2003} by means of several procedures including SIR and SAVE, and by \citet{Li:Wang:2007} with DR. 
The latter authors noted that SIR provides only locational separation of the three types of digits, whereas their DR method also provides a distinction in variation \cite[see Figure~3 of][]{Li:Wang:2007}.

For this classification problem, the response variable is the class label of each digit. We applied the proposed MSIR method to the training set made up of 2219 digits. 
For the group of 0 digits, the selected GMM was a 9-component mixture with ellipsoidal equal shape covariance matrices (VEV). A 7-component GMM was selected for the group of 6 digits, whereas a 5-component mixture for the group of 9 digits, both with ellipsoidal equal volume and shape covariance matrices (EEV). 
Figure~\ref{fig1:pendigits} shows a static view of a 3D plot of observations projected along the first three MSIR directions (for a rotating 3D plot, see the Supplementary material). The three groups of digits appear to be well separated by both location and variation, with a small separate sub-group of points for digits 9, and some outliers.
Comparing this plot with Figure~3 of \citet{Li:Wang:2007}, we note that the main characteristics of the data are retained, but some other features are also visible, such as the more compact shape for the main group of 9's, and the elongated, curved cluster of 0's.

\begin{figure}[htb]
\centering
\def\baselinestretch{1.1}
\includegraphics[width=0.5\linewidth]{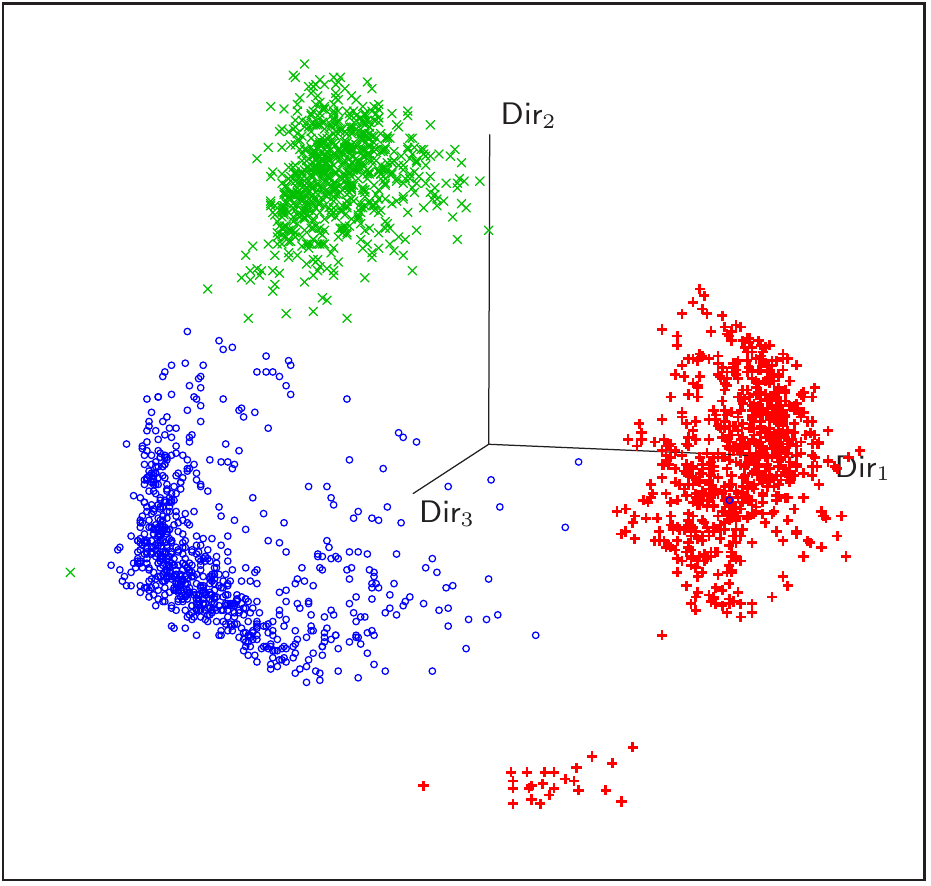}
\caption{Static view of a 3D plot of handwritten digits projectd along first three MSIR directions, with points marked according to digit: \textcolor{blue}{$\circ$} = 0, \textcolor{green}{$\times$} = 6, \textcolor{red}{+} = 9.}
\label{fig1:pendigits}
\end{figure}

One advantage of the MSIR approach is that it allows straightforward classification of observations on the basis of the estimated finite mixtures for each class. 
In the present case, the estimated MSIR model postulates that digits from class $h=\{0,6,9\}$ can be described as 
$\hat{f}(\X|Y=h) = \sum_{k=1}^{K_h} \hat{\pi}_{hk}\phi(\X; \hat{\mub}_{hk}, \hat{\Sigmab}_{hk})$,
with number of components $K_h = \{9,7,5\}$ and covariance matrices $\hat{\Sigmab}_{hk}$ which are parametrized according to models VEV, EEV and EEV, as described in \citet{Fraley:Raftery:2006Mclust}. 
Thus, we may estimate the probability of obtaining a digit $h = \{0,6,9\}$, given predictors $\X$ as follows:
\begin{equation*}
\hat{\Pr}(Y=h|\X) = \frac{\hat{f}(\X|Y=h) \hat{\tau}_h}{\displaystyle\sum_{l = \{0,6,9\}} \hat{f}(\X|Y=l)\hat{\tau}_l},
\end{equation*}
where $\hat{\tau}_l$ are the observed fractions of digits $l$ in the sample.
Recalling that the CDRS is subspace $\Space({\B})$ so that $Y \ind \X|\B\T\X$, the above expression can be expressed equivalently as:
\begin{equation*}
\hat{\Pr}(Y=h|\Zest) = \frac{\hat{f}(\Zest|Y=h) \hat{\tau}_1}{\displaystyle\sum_{l= \{0,6,9\}} \hat{f}(\Zest|Y=l)\hat{\tau}_l},
\end{equation*}
where $\Zest = \Xobs\hat{\B}$ are the MSIR variates and
$\hat{f}(\Zest|Y=h) = \sum_{k=1}^{K_h} \hat{\pi}_{hk}\phi(\Zest;  \hat{\B}\T\hat{\mub}_{hk},  \hat{\B}\T\hat{\Sigmab}_{hk} \hat{\B})$. 
Observations, from either the training or test sets, can be classified according to the MAP principle. By Proposition 1, a classification rule can only be based on a subset of the most important directions.
Figure~\ref{fig2:pendigits} shows the error rates for classifying digits from the training and test sets as a function of CDRS dimension. The smallest error rate is achieved when $d=3$, i.e., when the first three MSIR directions are used.

Table~\ref{tab1:pendigits} shows the training and test error rates for some classification methods: (i) classical linear discriminant analysis, (ii) discriminant analysis based on Gaussian finite mixture modeling \citep{Fraley:Raftery:2002}, (iii) SIR, obtained by fixing $G=1$ for all classes and using the two estimable directions, and (iv) MSIR using the first three directions. 
The training errors are the same for the first three methods, but the test errors are different, as GMMDA achieves the smallest value. 
Classification based on MSIR provides a larger error rate on the training set, but the smallest classification error on the test set. Thus, in this case, the classification rule based on the MSIR directions appears to be more robust, as it avoids overfitting the training set and achieves a good accuracy on the test set.

\begin{minipage}[t]{\textwidth}
%%%%%%%%%%%%%%%%%%%%%%%%%%%%%%%%%%%%%%%%%%%%%%%%%%%%%%%%%%%%%%%%%%%%%%%%%%%%%
\begin{minipage}[t]{0.55\linewidth}
\def\baselinestretch{1.1}
\includegraphics[width=\textwidth]{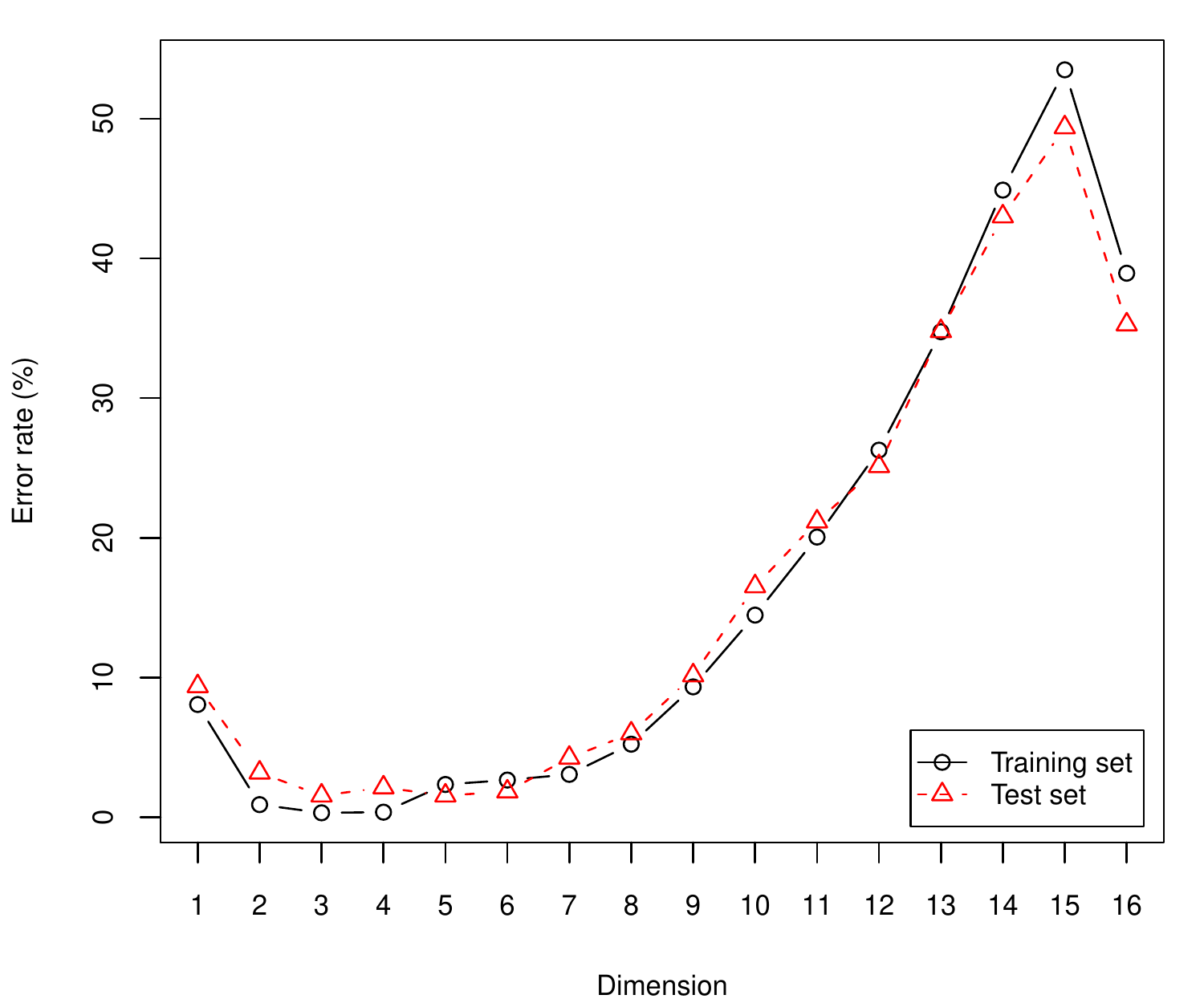}
\figcaption{Classification error rates of MSIR for pen digit data set as a function of dimensionality.}
\label{fig2:pendigits}
\end{minipage}
%%%%%%%%%%%%%%%%%%%%%%%%%%%%%%%%%%%%%%%%%%%%%%%%%%%%%%%%%%%%%%%%%%%%%%%%%%%%%%
\qquad
%%%%%%%%%%%%%%%%%%%%%%%%%%%%%%%%%%%%%%%%%%%%%%%%%%%%%%%%%%%%%%%%%%%%%%%%%%%%%%
\begin{minipage}[t]{0.35\linewidth}
\vspace*{-19em}
\centering
\tabcaption{Classification error rates for some classifiers based on training and test sets.}
\label{tab1:pendigits}
\smallskip
\begin{tabular}{lcc}
\toprule
 & \multicolumn{2}{c}{Error rate \%} \\
Classifier &  Train &  Test \\
\midrule
LDA           &  0.18 &  2.32 \\
GMMDA         &  0.18 &  2.03 \\
SIR ($d=2$)   &  0.18 &  2.13 \\
MSIR ($d=3$)  &  0.32 &  1.55 \\
\bottomrule
\end{tabular}
\end{minipage}
%%%%%%%%%%%%%%%%%%%%%%%%%%%%%%%%%%%%%%%%%%%%%%%%%%%%%%%%%%%%%%%%%%%%%%%%%%%%%%
\end{minipage}

\section{Concluding remarks}

In this paper we propose a model-based approach to dimension reduction which yields a more flexible version of SIR. This is achieved by modeling the distribution within each slice through a finite mixture of Gaussian densities. 
The algorithm for MSIR estimation, determination of dimensionality, and some other results are presented. 
The favorable behavior of MSIR with respect to other popular dimension reduction methods are shown through extensive simulation studies. 
In particular, MSIR overcomes the main limitation of standard SIR in dealing with symmetric relationships. 
Compared with SAVE, MSIR is more efficient and has higher accuracy in the case of linear trends. 
Its performance, particularly for correlated predictors, is also competitive with, or superior to, that of DR, which is reported by \citet{Li:Wang:2007} as the most accurate dimension reduction method based on the first two inverse moments.

\cite{Cook:Forzani:2009} recently introduced a likelihood-based dimension reduction method under the assumption of conditional normality of predictors given the response. Numerical optimization was used for maximization of the log-likelihood on Grassman manifolds. There are similarities between the two methods, but also some substantial differences. In particular, their proposal assumes $\X|Y \sim N(\mub_y, \Deltab_y)$, where both mean and covariance matrix depend on the response variable. Different structures for $\mub_y$ and $\Deltab_y$ yield different models. In MSIR, we employed the flexibility of finite mixture of Gaussian densities to approximate the distribution of $\X|Y$, with data-driven selection of the number of components and the covariance structure. 
Another recent proposal by \cite{Wang:Yin:2011} introduces the use of orthogonal series to estimate the inverse mean space. 
The relative merits and a thorough comparison of these approaches compared with our proposal is an area for further research.

In this paper, we deal with the standard setting, in which the number of observations is larger than the number of predictors.
However, in the case of $p \gg n$, we need to account for possible singularities in the estimation of covariance matrices, arising both from the fitting of Gaussian mixture models and the marginal distribution of the predictors. 
This can be done by imposing restrictions on the possible form of covariance structures, i.e., assuming spherical or diagonal covariance matrices.

Finally, we point out that there are some open issues which deserve further study, as, for instance, the sensitivity of MSIR to the violation of the linearity condition, the applicability in case of high-dimensional predictors, the investigation of other criteria for selecting the mixture model parametrization and number of components within slice.

Supplementary materials including further tables and graphs of simulation results are available from the author's web page.
An \textsf{R} package called \texttt{msir} implementing the method proposed in this paper is available on the Comprehensive R Archive Network at \url{http://CRAN.R-project.org/package=msir}.

%! References
\bibliographystyle{spbasic}
\bibliography{msir}

\end{document}